\newtheorem{prop}{Proposition}
\newtheorem{corr}{Corollary}
\title{Quantile Regression Modelling via Location and Scale Mixtures of Normal Distributions}
\author{Haim Y. Bar\thanks{
    haim.bar@uconn.edu. The authors gratefully acknowledge the following funding support: Prof. Bar's research was supported by NSF-DMS
1612625. Professor Booth's research was partially supported by NSF-DMS 1611893. Professor Wells' research was partially supported by NSF-DMS 1611893, and NIH grant U19 AI111143.}\\
    315 Philip E. Austin Building
Department of Statistics,
University of Connecticut
Storrs, CT, 06269-4120, USA.\\
    and \\
    James G. Booth and Martin T. Wells\\
    Department of Statistics and Data Science, Cornell University, \\ Ithaca NY, 14853, USA.}
\pgfplotsset{compat=1.12}
\date{}	
\begin{document}
\maketitle
\vspace{-1cm}
\begin{abstract}
We show that the estimating equations for quantile regression can be 
solved using a simple EM algorithm in which the M-step is computed via
weighted least squares, with weights computed at the E-step as the
expectation of independent generalized inverse-Gaussian variables. We compute the
variance-covariance matrix for the quantile regression coefficients
using a kernel density estimator that results in more
stable standard errors than those produced by existing software. A
natural modification of the EM algorithm that involves fitting a linear
mixed model at the M-step extends the methodology to mixed effects
quantile regression models. In this case, the fitting method can be
justified as a generalized alternating minimization algorithm. 
Obtaining quantile regression estimates via the weighted least squares
method enables model diagnostic techniques similar to the ones used 
in the linear regression setting.
The computational approach is compared with existing software using
simulated data, and the methodology is illustrated with several case
studies. 
\end{abstract}

Keywords: Expectation Maximization (EM) algorithm,
Generalized Alternating Minimization (GAM) algorithm,
Mixed effects regression,
model diagnostics.


\section{Introduction}\label{sec.introduction}
Quantile regression (QR) is used to predict how a certain percentile of a quantitative response variable, $Y$,
changes with some predictors, $\mathbf{x}=(x_1,\ldots,x_p)$ and as a consequence offers an approach for examining how covariates influence the
location, scale, and shape of the response distribution. It is assumed that the predictors $x_j$ 
and the $q$-th quantile of $Y$ are linearly related, and that $y_1,\ldots,y_n$ are 
independent distributed
as $P(y_i<y)=F(y-\mathbf{x}_i^T\bm{\beta}_q)$, where $\bm{\beta}_q$ is a vector of unknown
coefficients which depends on $q\in(0,1)$. 
(Henceforth, as usual, $\mathbf{x}_i^T\bm\beta_q$ will include the intercept, $\beta_{0,q}$.)

The idea of estimating a median regression slope based on minimizing sum of the absolute deviances has a long history that dates back to Boskovic, Edgeworth, and Laplace.  A very special case of QR is when $q=0.5$ and the random errors $y_i - \mathbf{x}_i^T\bm\beta$
are assumed to be (i) normally distributed, (ii) with mean 0 and variance $\sigma^2$, and
 (iii) uncorrelated with the predictors.  In this case, the relationship between $\mathbf{x}$ and the 
 median of $Y$
 is the same as between $\mathbf{x}$ and the mean of $Y$, and linear regression (LR) is used to estimate the
 rate of change in $Y$ for a unit change in each $x_j$, via the ordinary least squares (OLS) method.
Yet, although QR is more versatile in the sense that it can be used to make more general
 predictions than just for the mean of $Y$, and although LR relies on more restrictive 
 and sometimes unrealistic assumptions, and is sensitive to outliers, there is no doubt that for
 over a century LR has been vastly more popular in applications than QR.
This is, perhaps, due to the fact that LR offers closed-form formulas for the model parameters,
and also because, when the assumptions regarding the random error are valid, the distributional properties
of the model parameters are known and can be used to perform inferential procedures. Inference in
the QR setting is based on rank-based methods which are less powerful than their parametric 
counterparts in LR, or on computationally challenging resampling methods, such as the bootstrap.
Another historical reason for favoring the linear regression framework is that it 
extends rather naturally to include complex error structures such as random effects.

In the LR framework, the normal model for the errors yields a convenient likelihood function which
can be easily maximized with respect to $\bm\beta$. In practice, this is achieved through solving the estimating
equation which is obtained by minimizing the negative of the log-likelihood, that is, by solving
$$\hat{\bm\beta}=\underset{\bm{\beta}}{\arg\min}\sum_{i=1}^n(y_i-\mathbf{x}_i^T\bm\beta)^2\,.$$
So, numerically, the solution is obtained by minimizing the quadratic loss function, rather than
directly maximizing the likelihood.
Similarly, in the QR framework, the estimation of the regression parameters (for a specific quantile of interest, $q$)
is done (e.g., \cite{koen:hall:2001,koen:bass:1978}) by solving an estimating equation involving 
the `check' loss function (see Figure \ref{checkloss}):
\begin{equation}\label{objectiveFunction}
 \hat{\bm{\beta}}_q=\underset{\bm{\beta}}{\arg\min}\sum_{i=1}^n\rho_q\left(y_i-\mathbf{x}_i^T\bm{\beta}\right)
\end{equation}
where
\begin{equation}
 \rho_q(u)=u\cdot(q-1_{[u<0]})\,,
\end{equation}
and $1_{[u<0]}$ is the indicator function which equals 1 when the argument, $u$, is negative.
Equation (\ref{objectiveFunction}) does not lead to a closed-form formula for $\bm\beta_q$,
but a numerical solution via algorithmic methods is feasible. For a comprehensive review of QR in general,
see \cite{koen:2005}. For a different approach for estimating $\bm\beta_q$ using
the majorization-minimization (MM) algorithm, see \cite{HunterLange:2000}.

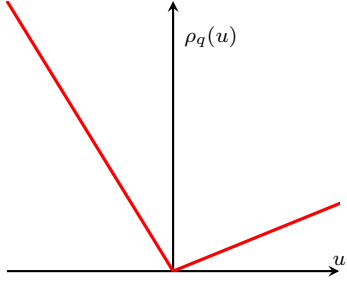
\begin{figure}
    \begin{center}
    \begin{tikzpicture}[scale=1 ]
    \pgfmathsetmacro{\a}{0.1}
    \pgfmathsetmacro{\b}{1} 
    \pgfplotsset{ticks=none}
      \begin{axis}[axis lines=middle,xmin=-0.75,xmax=0.75,ymin=0,ymax=0.75*0.8, thick, width=6cm,
        xlabel=$\scriptstyle u$,
        ylabel=$\scriptstyle \rho_q(u)$,
        x label style={at={(1,0.1)},anchor=north},
        y label style={at={(0.5,0.95)}},
       no marks,
       samples=100
            ]
        \coordinate (O) at (0,0);
    \addplot+[red,domain=-0.75:0,very thick] {-0.8*x};
    \addplot+[red,domain=0:3,very thick] {0.2*x};
          \end{axis}
    \end{tikzpicture}
    \end{center}      
    \caption{The `check' loss function ($q=0.2$)\label{checkloss}}
    \end{figure}

The optimization problem (\ref{objectiveFunction}) is equivalent to maximum likelihood estimation
under the assumption that the errors in the linear model,
$y_i = \mathbf{x}_i^T\bm\beta_q + u_i$,
are an iid sample from an asymmetric Laplace distribution (ALD), with
density given by
\begin{eqnarray*}
  f(u|\alpha,q)=q(1-q)\exp\left\{ -\rho_q( u/\alpha ) \right\}/\alpha
\end{eqnarray*}
\cite{yu:moye:2001}. 
Methods for fitting quantile regression models for linear mixed models using the asymmetric Laplace distribution
have been proposed; see, for example, \cite{gera:bott:2007},
\cite{gera:bott:2014}, and \cite{gala:band:lach:2017} who use an EM-type algorithm combined with numerical quadrature for fitting the QR model.
Although it is possible to represent QR in terms of a likelihood function,
the estimation procedure still requires numerical estimation of $\bm\beta_q$ and
the inferential procedures are computationally inefficient because
the loss function is not differentiable at 0. In a Bayesian approach to quantile regression, \cite{kozumi2011gibbs} posited the ALD as the working likelihood to perform the inference and \cite{yu:moye:2001} proposed an efficient Gibbs sampling algorithm by using a location and scale mixture representation of the ALD.


We propose a new computational method for QR, using an implicit
location and scale mixture model approach as
in \cite{kozubowski2000asymmetric}.  The computations needed to
minimize (\ref{objectiveFunction}) are typically formulated as a
linear programming problem
(\cite{koenker1987algorithm,portnoy1997gaussian}) which allows for
efficient computation. The utility of the location and scale mixture model representation used in this paper is that,
conditional on a particular set of random variables,  flexible  approaches  of a normal  distribution modeling framework are at our disposal.  The representation sets the stage for developing more complex QR models (see Section \ref{sec.casestudy}).

The underlying estimating equation we use is also (\ref{objectiveFunction}), so the QR estimator obtained from our method is identical to the one obtained via traditional estimation, and therefore, it is consistent. However, our approach has a few advantages over the traditional method. First, it allows us to obtain a closed-form analytic
solution to the regression problem. In fact, the solution is the well-known weighted least squares (WLS) estimator. Second, 
since our QR estimator is obtained via the weighted least squares method, it is possible to construct model
diagnostic techniques akin to the ones used in the LR setting.
Third, our method yields  less variable estimates for the standard errors of the regression
coefficients. Fourth, because the residuals are shown to be (conditionally) normally distributed, our method is
easily extended to quantile regression in mixed models. 
Finally, parameter estimation is done efficiently via an Expectation Maximization (EM) algorithm.  A key distinction between our approach to using the ALD and the Bayesian approach of \cite{yu:moye:2001} is that we do not posit the ALD as the working likelihood.  Our analytic representation is used as an exact representation of the QR estimating equation.  As a consequence the properties of our estimators will not depend on the correct data generating process. 

We note that the scale mixture of normal distributions approach was used
by \cite{AlbertChib:1993} in the context of linear models with binary or polychotomous response,
and by \cite{pols:scot:2012} in the context of estimating the coefficients of a separating hyperplane
in classification problems using support vector machines. The latter is perhaps most closely related to our approach,
in that they use a similar loss function, namely $d(\bm\beta)=\sum_{i=1}^n\max(1-y_i\mathbf{x}_i^T\bm\beta,0)$,
which is also non-differentiable at 0.

The paper proceeds as follows. In Section \ref{sec.model} we introduce our model and in Section \ref{sec.properties}
we derive some of its useful and interesting properties. In Section \ref{sec.mixedmodels} we discuss how the method is
extended to include random effects. In Section \ref{sec.simulation} we summarize our simulation study, and in Section
\ref{sec.casestudy} we apply our method to three data sets. In the first example we use a fixed-effect model to 
determine whether the level of some metabolites are associated with different BMI percentiles.
In the second example, we show how we can use our method to fit a longitudinal QR model, using 
temperature data. In the third case study, we show how to fit a frailty QR model, using emergency department
length of visit data.
Section \ref{sec.discussion} contains a brief conclusion and discussion of future work.

\section{An Implicit Location and Scale Mixture Model for Quantile Regression}\label{sec.model}
We introduce a scale-mixture of normal distributions model
for $\exp[-2\rho_q(u_i)]$ and show that it yields
quantile regression estimates which have the familiar weighted least squares form, and can be obtained
efficiently via an EM algorithm \cite{EM}.
\begin{prop}\label{propPLik}
Define the joint distribution of $(u, \lambda)$ as follows.
Suppose that $\lambda$ has an 
exponential distribution with rate equal to $2q(1-q)$ and that 
$u|\lambda\sim N[(1-2q)\lambda,\lambda]$. Then the marginal density of
$u$ is $h(u) = 2q(1-q)e^{-2\rho_q(u)}$.
\end{prop}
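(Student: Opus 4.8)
The plan is to compute $h(u)$ directly by integrating out $\lambda$, i.e. $h(u)=\int_0^\infty f(u\mid\lambda)\,f(\lambda)\,d\lambda$, where $f(u\mid\lambda)=(2\pi\lambda)^{-1/2}\exp\{-(u-(1-2q)\lambda)^2/(2\lambda)\}$ and $f(\lambda)=2q(1-q)\exp\{-2q(1-q)\lambda\}\,1_{[\lambda>0]}$. First I would factor out the constant $2q(1-q)$, expand the square in the Gaussian exponent, and collect terms by powers of $\lambda$: a term $-u^2/(2\lambda)$, a term linear in $\lambda$ with coefficient $-\{(1-2q)^2/2+2q(1-q)\}$, and a $\lambda$-free term $u(1-2q)$.

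The first key simplification is that the coefficient of $\lambda$ collapses, since $(1-2q)^2/2+2q(1-q)=1/2$ regardless of $q$. Hence the integrand becomes $2q(1-q)\,e^{u(1-2q)}\,(2\pi\lambda)^{-1/2}\exp\{-u^2/(2\lambda)-\lambda/2\}$, and the problem reduces to evaluating $\int_0^\infty (2\pi\lambda)^{-1/2}\exp\{-u^2/(2\lambda)-\lambda/2\}\,d\lambda$. I would recognize this as the normalizing constant of a generalized inverse-Gaussian density with index $1/2$ (an inverse-Gaussian kernel), or equivalently invoke the standard identity $\int_0^\infty t^{-1/2}e^{-\alpha t-\beta/t}\,dt=\sqrt{\pi/\alpha}\,e^{-2\sqrt{\alpha\beta}}$ for $\alpha,\beta>0$ with $\alpha=1/2$, $\beta=u^2/2$, which yields that the integral equals $e^{-|u|}$. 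This gives $h(u)=2q(1-q)\,e^{u(1-2q)-|u|}$.

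The final step is to verify that the exponent matches, i.e. $u(1-2q)-|u|=-2\rho_q(u)$, which is a short two-case check: for $u\ge 0$, $\rho_q(u)=qu$ and $u(1-2q)-u=-2qu$; for $u<0$, $\rho_q(u)=(q-1)u$ and $u(1-2q)+u=(2-2q)u=-2(q-1)u$. Both cases agree, so $h(u)=2q(1-q)e^{-2\rho_q(u)}$, as claimed. As a consistency check one can note that $\int 2q(1-q)e^{-2\rho_q(u)}\,du=1$, since by positive homogeneity of $\rho_q$ this is exactly the ALD normalization with $\alpha=1/2$ in the density quoted earlier.

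I do not expect a genuine obstacle here; the only place needing care is the evaluation of the $\lambda$-integral. If one prefers not to quote the GIG/Bessel identity, I would instead prove $I(u):=\int_0^\infty (2\pi\lambda)^{-1/2}\exp\{-u^2/(2\lambda)-\lambda/2\}\,d\lambda=e^{-|u|}$ from first principles, e.g. via the substitution $\lambda\mapsto u^2/\lambda$, which shows $I$ satisfies a simple relation forcing the exponential form, or by differentiating under the integral sign in $u$; but this is routine bookkeeping rather than the conceptual content of the proposition.
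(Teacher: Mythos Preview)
Your proof is correct and follows essentially the same route as the paper: expand the Gaussian exponent, use $(1-2q)^2+4q(1-q)=1$ to collapse the coefficient of $\lambda$ to $1/2$, evaluate the remaining integral $\int_0^\infty (2\pi\lambda)^{-1/2}\exp\{-(\lambda+u^2/\lambda)/2\}\,d\lambda=e^{-|u|}$ via the GIG/Bessel identity (the paper cites \cite{pols:scot:2012} and \cite{gradshteyn2014table}), and then identify $u(1-2q)-|u|=-2\rho_q(u)$. The only cosmetic difference is that the paper records the identity $\rho_q(u)=\tfrac{1}{2}[|u|+(2q-1)u]$ at the outset rather than verifying it case-by-case at the end.
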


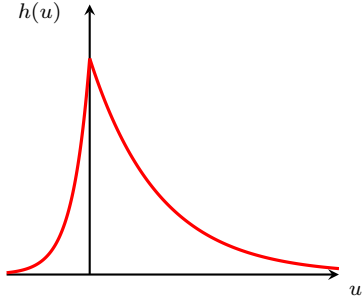
\begin{figure}
    \begin{center}
    \begin{tikzpicture}[scale=1 ]
    \pgfmathsetmacro{\a}{0.1}
    \pgfmathsetmacro{\b}{1} 
    \pgfplotsset{ticks=none}
      \begin{axis}[axis lines=middle,xmin=-3,xmax=9,ymin=0,ymax=0.4, thick, width=6cm,
        xlabel=$\scriptstyle u$,
        ylabel=$\scriptstyle h(u)$,
        x label style={at={(1.05,0)},anchor=north},
        y label style={at={(0,1.05)}},
       no marks,
       samples=100
            ]
        \coordinate (O) at (0,0);
    \addplot+[red,domain=-3:0,very thick] {2*0.2*0.8*exp(2*0.8*x)};
    \addplot+[red,domain=0:9,very thick] {2*0.2*0.8*exp(-2*0.2*x)};
          \end{axis}
    \end{tikzpicture}
    \end{center}      
    \caption{The marginal density, $h(u)=\int_0^\infty f(u|\lambda)\pi(\lambda)d\lambda$, where $u|\lambda\sim N[(1-2q)\lambda,\lambda]$ and  $\lambda\sim Exp(2q(1-q))$.
    (In this case, showing $h(u)$ for  $q=0.2$)\label{hu}}
    \end{figure}
    
The proof appears  in the appendix. The shape of the marginal density function $h(u)$ is depicted in Figure \ref{hu}.
Setting the residuals from the quantile regression
to be $u_i=y_i-\mathbf{x}_i^T\bm\beta_q$,
and allowing the distribution of each $u_i$ to have its own scale parameter, $\lambda_i$,
Proposition (\ref{propPLik})
implies that the minimizer, $\hat{\bm\beta}_q$ in (\ref{objectiveFunction}) can be obtained 
using an EM algorithm in which the exponentially distributed $\lambda_i$'s are missing values.
Specifically, the joint density of $(u_i,\lambda_i)$ 
is
$$p(u_i,\lambda_i | \bm\beta_q)= \frac{2q(1-q)}{\sqrt{2\pi\lambda_i}}\exp\left\{ -\frac{[u_i-(1-2q)\lambda_i]^2 }{ 2\lambda_i } \right\}\exp\{ -2q(1-q)\lambda_i \}\,.$$
Hence the complete data log-likelihood (ignoring terms not involving
$u_i$, and hence also $\bm\beta$) is given by
$$ \sum_{i=1}^n\left( -\frac{u_i^2}{2\lambda_i} + (1-2q)u_i \right) \,. $$
The \textbf{E-step} of the EM algorithm involves taking expectation with
respect to the conditional distribution of $\lambda^{-1}_i$ given
$u_i$, which is inverse Gaussian, $\mathcal{IG}(|u_i|^{-1},1)$. 
Thus, in the \textbf{M-step} we fix the values of $\hat\lambda_i^{-1}$ at their estimated expected values after the latest iteration,
and use the (conditional) multivariate normal distribution of $\mathbf{u}|\bm\lambda$ to obtain the weighted least squares (WLS) estimator
\begin{equation}\label{betahat}
\hat{\bm\beta}_q=(X^T\hat\Lambda^{-1}X)^{-1}X^T\hat\Lambda^{-1}\left[Y-(1-2q)\hat{\bm\lambda}\right]\,.
\end{equation}
The details are summarized in Algorithm \ref{algorithm}, where $\ell$ is the conditional
log-likelihood of $\mathbf{u}|\bm\lambda$, and to initialize $\bm\beta'_q$ we use the ordinary least square
estimator.

\begin{algorithm}
\caption{The Quantile Regression EM (QREM) algorithm}\label{algorithm}
\begin{algorithmic}[0]
\State Initialize $\epsilon>0$, $\delta = 2\epsilon$
\State Initialize $\hat{\bm\beta}^{'}_q=(X^TX)^{-1}X^TY$
\While{$\delta > \epsilon$}
\State E-Step: $\hat\lambda_i^{-1}=|y_i-\mathbf{x}_i^T\hat{\bm\beta}_q^{'}|^{-1}$
\State M-Step: $\hat{\bm\beta}_q=(X^T\Lambda^{-1}X)^{-1}X^T\Lambda^{-1}\left[Y-(1-2q)\hat{\bm\lambda}\right]$
\State $\delta=|\ell(\mathbf{u}|X, Y,\bm\beta^{'}_q)-\ell(\mathbf{u}|X, Y,\bm\beta_q)|$
\State $\bm\beta^{'}_q\leftarrow\bm\beta_q$
\EndWhile
\end{algorithmic}
\end{algorithm}

It is interesting to note that our representation of the asymmetric Laplace distribution falls within the large class of generalized hyperbolic distributions, which includes the Student's $t$, Laplace, hyperbolic, normal-inverse Gaussian, and variance-gamma distributions \cite{mcneil2005}.  The  generalized inverse Gaussian (GIG) distribution was introduced by \cite{good53}.  A generalized hyperbolic random variable $U$ can be represented by combining a random variable $\Lambda \sim \text{GIG}(\psi,\chi,\lambda)$ and a latent multivariate Gaussian random variable $ X \sim N(0,\Sigma)$ using the relationship $ U = \mu+ \Lambda \alpha + \sqrt{\Lambda} X,$
and it follows that $ U \mid \Lambda \sim N(\mu+\Lambda \alpha,\Lambda \Sigma)$.

\section{Properties of the Model}\label{sec.properties}
We now show that the QR estimator derived from our mixture model has some useful properties
for statistical inference and model diagnostics. First, we claim the following:
\begin{prop}\label{propConsistent}
The QR estimator, $\hat{\bm\beta}_q$, obtained from our model is consistent.
\end{prop}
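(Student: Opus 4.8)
The plan is to show that the estimator produced by the QREM algorithm is, at convergence, a solution of the classical quantile regression estimating equation, so that consistency follows from the well-established large-sample theory for quantile regression (\cite{koen:bass:1978}, \cite{koen:2005}). The argument has two parts: an algebraic identification of the fixed point of Algorithm \ref{algorithm} with the subgradient optimality condition for (\ref{objectiveFunction}), and then a direct appeal to known consistency results once that identification is in place.

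First I would characterize the stationary points of the algorithm. At a fixed point the E-step gives $\hat\lambda_i=|u_i|$ with $u_i=y_i-\mathbf{x}_i^T\hat{\bm\beta}_q$, and the M-step (\ref{betahat}) forces the weighted normal equations
\[
X^T\hat\Lambda^{-1}\bigl(Y-X\hat{\bm\beta}_q-(1-2q)\hat{\bm\lambda}\bigr)=0 .
\]
Substituting $\hat\lambda_i=|u_i|$ and simplifying the $i$-th summand,
$\bigl(u_i-(1-2q)|u_i|\bigr)/|u_i|=\operatorname{sign}(u_i)-(1-2q)=2\bigl(q-1_{[u_i<0]}\bigr)$,
so the fixed-point condition becomes $\sum_{i=1}^n\mathbf{x}_i\bigl(q-1_{[u_i<0]}\bigr)=0$. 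This is exactly the (sub)gradient optimality condition for the check-loss objective in (\ref{objectiveFunction}); hence $\hat{\bm\beta}_q$ from QREM satisfies the same estimating equation as the standard quantile regression estimator, and the two coincide (up to the non-uniqueness inherent in a piecewise-linear objective).

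With the identification in hand, consistency is inherited from the classical theory: under the model $P(y_i<y)=F(y-\mathbf{x}_i^T\bm\beta_q)$ with $F$ having $q$-th quantile zero and a positive density in a neighborhood of it, together with standard design conditions (e.g.\ $n^{-1}X^TX\to D$ positive definite), the minimizer of $\sum_i\rho_q(y_i-\mathbf{x}_i^T\bm\beta)$ is consistent for $\bm\beta_q$; see \cite{koen:bass:1978} and \cite{koen:2005}. Since our estimator is such a minimizer, $\hat{\bm\beta}_q\xrightarrow{p}\bm\beta_q$.

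The main obstacle is the non-smoothness of the check loss at the origin, which makes the ``fixed point equals estimating equation'' step slightly delicate: residuals that are exactly zero make the E-step weight $\hat\lambda_i^{-1}$ ill-defined, and the optimality condition for (\ref{objectiveFunction}) is genuinely a subgradient inclusion rather than an equation. I would handle this by treating the zero-residual indices separately --- at a quantile regression solution generically $p$ residuals vanish --- and checking that the fixed-point relations of the algorithm reproduce the subgradient inclusion (with the usual convention that a zero residual contributes a value in the appropriate interval), so that no spurious solutions are introduced. Everything after that identification is a direct citation to existing results.
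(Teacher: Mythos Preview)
Your proposal is correct and follows essentially the same strategy as the paper: identify the QREM estimator with the classical quantile regression minimizer of (\ref{objectiveFunction}), then cite \cite{koen:2005} for consistency. The paper's own proof is much terser --- it simply notes that minimizing $\rho_q$ is equivalent to maximizing $\exp(-2\rho_q)$ (the marginal likelihood from Proposition~\ref{propPLik}) and refers to Koenker --- whereas you supply the explicit fixed-point algebra, which in fact reappears later in the paper as Proposition~\ref{propUncorr}.
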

\begin{proof} 
This is clear because,
minimizing  the check-loss function $\rho_q(u)$ yields the same solution obtained  by maximizing $\exp[-2\rho_q(u)]$.
See \cite{koen:2005}, Section 4.1.2, for a thorough discussion about the consistency of the QR estimator from
(\ref{objectiveFunction}). For convenience, 
the regularity conditions needed to ensure consistency can be found in Appendix \ref{sec.appendix}.
\end{proof} 

Next, we discuss the estimation of the standard errors of the regression coefficients.
We derived the estimate for $\bm\beta_q$ from the conditional multivariate normal distribution of 
$\mathbf{u}|\bm\lambda$  but, 
the naive covariance of the WLS estimator, $(X^T\Lambda^{-1}X)^{-1}$ with $\lambda_i$ 
 obtained via the EM-algorithm, is not valid for inference, for two reasons. First,
the error distribution is unspecified so there is no `data generating model'
from which the covariance of $\hat{\bm\beta}_q$ can be estimated.
Second, the mean of $y_i$, $\mathbf{x}_i^T\bm\beta_q+(1-2q)\lambda_i$ is a function of the variance
$\lambda_i$.
Therefore, to obtain the large-sample covariance matrix of the QR parameters, we use Bahadur's
representation \cite{bahadur1966} to establish the following:
\begin{prop}
Let $u_i$ be the QR residuals and assume that they are independent and identically distributed, with
$q$-quantile equal to $0$. Then,
$$a.var(\hat{\bm\beta}_q)=\frac{q(1-q)}{f(0)^2}(X^TX)^{-1}$$
where $f(0)$ is the probability density of the $u_i$'s at 0.
\end{prop}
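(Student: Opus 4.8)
The plan is to treat $\hat{\bm\beta}_q$ as an M-estimator defined by the (sub)gradient condition associated with (\ref{objectiveFunction}) and to linearize it using Bahadur's representation. First I would write the estimating function as $S_n(\bm\beta)=\sum_{i=1}^n\mathbf{x}_i\,\psi_q(y_i-\mathbf{x}_i^T\bm\beta)$, where $\psi_q(u)=q-1_{[u<0]}$ is (minus) the derivative of $\rho_q$ away from the origin; the minimizer of (\ref{objectiveFunction}) satisfies $S_n(\hat{\bm\beta}_q)\approx\mathbf{0}$ up to a term involving at most $p$ observations, which is asymptotically negligible. Writing $u_i=y_i-\mathbf{x}_i^T\bm\beta_q$ and using the hypothesis that the $u_i$ are i.i.d.\ with $q$-quantile $0$, i.e.\ $P(u_i<0)=q$, we get $E[\psi_q(u_i)]=q-q=0$, so that $\bm\beta_q$ is indeed the population solution.

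Next I would assemble the two ingredients of the sandwich formula. Since $1_{[u_i<0]}\sim\mathrm{Bernoulli}(q)$, a one-line computation gives $\mathrm{Var}(\psi_q(u_i))=E[\psi_q(u_i)^2]=q^2(1-q)+(1-q)^2q=q(1-q)$, whence by independence $\mathrm{Var}\bigl(S_n(\bm\beta_q)\bigr)=q(1-q)\,X^TX$. For the Hessian-type term, set $\bar S(\bm\beta)=E[S_n(\bm\beta)]$ and note that $P(y_i-\mathbf{x}_i^T\bm\beta<0)=F\bigl(\mathbf{x}_i^T(\bm\beta-\bm\beta_q)\bigr)$, where $F$ is the common c.d.f.\ of the $u_i$ with $F(0)=q$; differentiating at $\bm\beta=\bm\beta_q$ gives $\partial\bar S(\bm\beta)/\partial\bm\beta^T\bigl|_{\bm\beta_q}=-f(0)\,X^TX$, where $f=F'$ is assumed continuous and strictly positive at $0$.

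Then I would invoke the regression form of Bahadur's representation \cite{bahadur1966} (using the regularity conditions recalled in Appendix \ref{sec.appendix}) to obtain
$$\hat{\bm\beta}_q-\bm\beta_q=\frac{1}{f(0)}(X^TX)^{-1}\sum_{i=1}^n\mathbf{x}_i\,\psi_q(u_i)+R_n,$$
with $R_n$ of smaller order than the leading term. Applying a Lindeberg--Feller central limit theorem to $(X^TX)^{-1/2}\sum_i\mathbf{x}_i\,\psi_q(u_i)$ and combining it with the variance computation above yields
$$a.var(\hat{\bm\beta}_q)=\frac{1}{f(0)}(X^TX)^{-1}\cdot q(1-q)X^TX\cdot\frac{1}{f(0)}(X^TX)^{-1}=\frac{q(1-q)}{f(0)^2}(X^TX)^{-1},$$
which is the claim.

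The main obstacle is making the Bahadur linearization rigorous: $\rho_q$ is non-differentiable at $0$ and $S_n$ is a discontinuous step function of $\bm\beta$, so the remainder $R_n$ cannot be bounded by a naive Taylor argument. I would control it by exploiting the convexity of the check loss --- via a convexity (Pollard-type) argument, or the empirical-process / stochastic-equicontinuity treatment in \cite{koen:2005}, Section 4 --- to show that $S_n(\bm\beta_q+n^{-1/2}\mathbf{t})$ is uniformly close on compact sets of $\mathbf{t}$ to its linear surrogate $S_n(\bm\beta_q)-f(0)X^TX\,n^{-1/2}\mathbf{t}$, which pins down $\sqrt{n}(\hat{\bm\beta}_q-\bm\beta_q)$ and forces $R_n=o_p(n^{-1/2})$. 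A secondary point deserving care is the fixed- versus random-design distinction: with fixed $\mathbf{x}_i$ the matrix $X^TX$ is the realized Gram matrix, and one needs $n^{-1}X^TX$ to converge to a positive-definite limit together with a negligibility condition such as $\max_i\|\mathbf{x}_i\|^2/n\to0$ for the CLT --- precisely what the appended regularity conditions supply.
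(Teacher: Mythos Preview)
Your proposal is correct and follows the same route the paper indicates: the paper does not spell out a proof at all but simply invokes Bahadur's representation \cite{bahadur1966} and points the reader to \cite[Chapter~4.3]{koen:2005} for details. Your write-up is in fact considerably more explicit than the paper's --- you identify the score $\psi_q(u)=q-1_{[u<0]}$, compute $\mathrm{Var}(\psi_q(u_i))=q(1-q)$ and the derivative $-f(0)X^TX$, and assemble the sandwich --- which is exactly the computation underlying the cited references, so the approaches coincide.
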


To obtain $f(0)$ we use a kernel density estimator, which is straightforward and fast, 
and does not require resampling methods. The same form of the covariance estimator
is used by \cite{Kocherginsky:2005}, but they propose using a Markov chain marginal bootstrap 
approach.  The performance of our approach is studied in the simulations section.
For more about Bahadur's representation for quantile regression, see also \cite[Chapter~4.3]{koen:2005}.

We now turn characteristics of the QR residuals, and derive properties which can be used to perform model
diagnostics, very similar to those used in linear regression for the mean model.
In Proposition \ref{propPLik} we derived the marginal distribution of the residuals $u_i$. From that,
the following properties of the residuals are easily obtained:
\begin{corr}\label{corrALD}
Marginally, $u_i\sim ALD(0, 2\sqrt{q(1-q)}, \sqrt{q/(1-q)})$ where $ALD$ is the asymmetric Laplace
distribution with three parameters (location, scale, and asymmetry), thus,
\begin{enumerate}
\item its cumulant generating function is
$K_U(t) = log[4q(1-q)] - log\left( [2(1-q)+t][2q-t] \right)$
\item $E(u)=\frac{1-2q}{2q(1-q)}$
\item $Var(u)=\frac{1}{4}\left(\frac{1}{q^2} + \frac{1}{(1-q)^2}\right)$
\item If $x_p$ is the $p$th quantile of the QR residuals, then
\begin{eqnarray*}
  x_p &=& \frac{1}{2(1-q)}\ln\frac{p}{q}\hspace{3mm}\mbox{if } p<q \\
  x_p &=& \frac{1}{2q}\ln\frac{1-q}{1-p}\hspace{3mm}\mbox{if } p>q \\
\end{eqnarray*}
\item $\int_{-\infty}^0 h(u)du = q$ and $\int_0^\infty h(u)du = 1-q$.
\end{enumerate}
\end{corr}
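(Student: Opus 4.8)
The plan is to reduce all five assertions to the explicit form of $h$ supplied by Proposition \ref{propPLik}. Evaluating $\rho_q$ separately on $\{u<0\}$ and $\{u\ge 0\}$ gives $\rho_q(u)=-(1-q)u$ for $u<0$ and $\rho_q(u)=qu$ for $u\ge 0$, so that
$$h(u)=2q(1-q)\begin{cases}e^{2(1-q)u}, & u<0,\\[2pt] e^{-2qu}, & u\ge 0.\end{cases}$$
This is a two-sided exponential density with distinct left and right rates, which is exactly the defining form of a three-parameter ALD; comparing it term by term with the standard ALD density written in location--scale--asymmetry form pins down the location as $0$ and identifies the scale and asymmetry with the values stated. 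With $h$ in this closed form, each item is a short computation.

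For item 5, integrate each branch: $\int_{-\infty}^0 2q(1-q)e^{2(1-q)u}\,du=q$ and $\int_0^\infty 2q(1-q)e^{-2qu}\,du=1-q$; in particular these sum to $1$, re-confirming Proposition \ref{propPLik}. For item 1, compute the moment generating function $M_U(t)=E[e^{tU}]$ as the sum of the same two integrals carrying the extra factor $e^{tu}$: the left integral converges for $t>-2(1-q)$ and equals $2q(1-q)/[2(1-q)+t]$, the right integral converges for $t<2q$ and equals $2q(1-q)/[2q-t]$, and adding the two fractions collapses the numerator to $4q(1-q)$. Taking logarithms yields $K_U(t)=\log[4q(1-q)]-\log\bigl([2(1-q)+t][2q-t]\bigr)$ for $t\in(-2(1-q),2q)$. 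Items 2 and 3 then follow by differentiating $K_U$ at $0$: since $K_U(t)=\log[4q(1-q)]-\log(2(1-q)+t)-\log(2q-t)$, one reads off $K_U'(0)=-\tfrac{1}{2(1-q)}+\tfrac{1}{2q}=\tfrac{1-2q}{2q(1-q)}$ and $K_U''(0)=\tfrac{1}{4(1-q)^2}+\tfrac{1}{4q^2}$, which is the claimed variance.

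Item 4 is obtained by inverting the distribution function piecewise, using item 5 to locate $F$ at $0$. Integrating $h$ up to $x<0$ gives $F(x)=q\,e^{2(1-q)x}$, so $F(x_p)=p$ yields $x_p=\tfrac{1}{2(1-q)}\ln(p/q)$, and this root is negative precisely when $p<q$. For $x\ge 0$, item 5 gives $F(x)=q+\int_0^x h=1-(1-q)e^{-2qx}$, so $F(x_p)=p$ yields $x_p=\tfrac{1}{2q}\ln\tfrac{1-q}{1-p}$, which is nonnegative precisely when $p>q$; at $p=q$ both expressions return $0$, consistent with the $q$-quantile of $h$ being $0$.

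None of the steps is genuinely difficult; the one point that requires care is the bookkeeping behind labeling $h$ as ``$ALD(0,2\sqrt{q(1-q)},\sqrt{q/(1-q)})$'', since several mutually incompatible scale/asymmetry conventions for the ALD appear in the literature and the convention must be fixed before the comparison is meaningful. A secondary subtlety is that $M_U$ exists only on the bounded interval $t\in(-2(1-q),2q)$, so $K_U$ is defined only locally; this is harmless for items 1--3 because $0$ lies in the interior of that interval, which is all that the cumulant reading-off requires.
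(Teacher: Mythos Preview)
Your proposal is correct and follows exactly the route the paper intends: the paper does not give a separate proof of this corollary at all, merely stating that ``the following properties of the residuals are easily obtained'' from the explicit form $h(u)=2q(1-q)e^{-2\rho_q(u)}$ in Proposition~\ref{propPLik}. Your write-up simply fills in the elementary calculus the paper omits, and your cautionary remark about ALD parametrization conventions is well taken since the paper never specifies which one it is using.
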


Furthermore, the fact that our QR estimator has the WLS closed-form as in (\ref{betahat}) yields
the following:

\begin{prop}\label{propUncorr}
Let $\mathbf{c}=sgn(Y-X^T\bm\beta_q) - (1-2q)\mathbf{1}$ be the scaled, binary-valued QR
residuals. Then, $X^T\mathbf{c}=\mathbf{0}$.
\end{prop}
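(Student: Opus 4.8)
The plan is to read the identity straight off the fixed‑point (normal) equations of the weighted least squares step, exploiting the fact that at convergence the EM weights are pinned to the reciprocal absolute residuals. Write $\mathbf{u}=Y-X\hat{\bm\beta}_q$ for the vector of QR residuals $u_i=y_i-\mathbf{x}_i^T\hat{\bm\beta}_q$, and recall from the E‑step that at a fixed point of Algorithm \ref{algorithm} one has $\hat\lambda_i^{-1}=|u_i|^{-1}$, equivalently $\hat\lambda_i=|u_i|$, so that $\hat\Lambda=\mathrm{diag}(|u_1|,\dots,|u_n|)$.

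First I would rewrite the closed form (\ref{betahat}) as the WLS normal equations
\[
X^T\hat\Lambda^{-1}\bigl(Y-X\hat{\bm\beta}_q-(1-2q)\hat{\bm\lambda}\bigr)=\mathbf 0,
\]
i.e.\ $X^T\hat\Lambda^{-1}\bigl(\mathbf u-(1-2q)\hat{\bm\lambda}\bigr)=\mathbf 0$. Then I would compute the $i$‑th coordinate of the vector $\hat\Lambda^{-1}\bigl(\mathbf u-(1-2q)\hat{\bm\lambda}\bigr)$: it equals $|u_i|^{-1}u_i-(1-2q)=\operatorname{sgn}(u_i)-(1-2q)$, which is exactly $c_i$. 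Hence the displayed normal equations say precisely $X^T\mathbf c=\mathbf 0$, which is the claim. As a consistency check I would observe that $c_i=2\bigl(q-1_{[u_i<0]}\bigr)$, so $X^T\mathbf c=\mathbf 0$ is merely a rescaling of the familiar first‑order (subgradient) condition $\sum_i \mathbf x_i\bigl(q-1_{[u_i<0]}\bigr)=\mathbf 0$ for the check‑loss objective (\ref{objectiveFunction}); one could alternatively prove the proposition by invoking that optimality condition directly, but since the proposition is advertised as a consequence of the WLS form, the substitution route above is the intended one.

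The only real wrinkle — and what I would flag as the main obstacle — is the treatment of exact zero residuals. At an exact minimizer of (\ref{objectiveFunction}) with a design in general position, $p$ of the $u_i$ vanish, so $\hat\Lambda^{-1}$ is not literally defined and $\operatorname{sgn}(0)$ must be interpreted. I would deal with this by working with the iterates (and limit) produced by the algorithm, where the residuals are nonzero and the normal equations hold verbatim, and stating the identity in the limit; equivalently, one allows $\operatorname{sgn}(0)\in[-1,1]$, so that $c_i$ attached to a zero residual ranges over $[2q-2,\,2q]$, which is exactly the subgradient set of $\rho_q$ at $0$ and keeps $X^T\mathbf c=\mathbf 0$ well posed. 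Apart from this bookkeeping, the argument is a one‑line substitution into (\ref{betahat}).
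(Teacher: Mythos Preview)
Your argument is correct and is essentially the paper's own proof: both rearrange the WLS normal equations from (\ref{betahat}) and use $\hat\lambda_i=|u_i|$ to collapse $\hat\Lambda^{-1}\bigl(\mathbf u-(1-2q)\hat{\bm\lambda}\bigr)$ to $\operatorname{sgn}(\mathbf u)-(1-2q)\mathbf 1=\mathbf c$. Your additional remarks on the subgradient interpretation and the handling of zero residuals go beyond what the paper writes but do not change the approach.
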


The proof appears  in the appendix. This is similar to linear regression with normal errors, where the errors, 
$\mathbf{e}$, are uncorrelated with the explanatory
 variables, namely $E(\mathbf{x}_j\cdot \mathbf{e})=0$.
To use this result to produce model diagnostic plots, consider first a continuous predictor, $\mathbf{x}_j$.
Let $A=\left\{i : c_i = 2q \right\}$ and $B=\left\{i : c_i = 2q-2 \right\}$ be the sets of points above and below
the quantile regression line, respectively.
Under the correct regression model, Proposition \ref{propUncorr} implies that for each predictor
the density of points above the QR line is the same as for the points below.
So, we construct a QQ-plot for the $j$-th predictor by plotting the quantiles of $x_{j,i}$ for $i\in A$ versus the quantiles of $x_{j,i}$ for $i\in B$. If the model is adequate, the points should lie close to the
$45^{\circ}$ line. We demonstrate such QQ-plots in the sections \ref{sec.simulation} and \ref{sec.casestudy}.

For categorical predictors, Proposition \ref{propUncorr}  implies that under the correct model, for each level, $k$, 
we expect to have $P(c_i=2q | x_{j,i}=k)=1-q$. 

Analogously to the linear model framework, where the regression parameters are obtained by minimizing the
quadratic loss function and the goodness of fit of competing models is measured in terms of the mean
sum of squared errors, we obtain the regression parameters by minimizing the check loss function, and hence,
a natural measure of goodness of fit is the mean  check loss error:
$$\bar{G}_{\rho_q}=\frac{1}{n}\sum_{i=1}^n\rho_q(u_i)\,.$$
Or, we can define $G=2n\bar{G}_{\rho_q}$, and, since $G=2\sum_{i=1}^n\rho_q(u_i)=-\sum_{i=1}^n \log(h(u_i))$,
the goodness of fit criterion can be defined in terms of the marginal distribution of $u$,
$$G=-\log(\mbox{maximum marginal likelihood of $u$})\,.$$

\section{Extension to Mixed Models}\label{sec.mixedmodels}
Many applications involve dependence between observations, which is not captured by the
fixed-effects model. A convenient modeling approach in such cases is to assume that dependent observations
come from a multivariate normal distribution. Often, a certain correlation structure is
assumed, if it can be argued that it  corresponds to the setting of the experiment.
With this approach, the relationship between the response and the predictors is
expressed as a mixed model
\begin{equation}
\mathbf{y} = X^T \bm\beta +Z^T\mathbf{v} + \bm\epsilon\,,
\end{equation}
where $\mathbf{v}$ are the random effect parameters, so that $\mathbf{v}\sim N(0, K)$.
It is assumed that the random errors are independent from the random effects and
$\bm\epsilon\sim N(0,\sigma_e^2I)$.

\cite{koen:2005} refer to longitudinal studies and says that
`...\textit{many of the tricks of the trade developed for Gaussian random effects models are
no longer directly applicable in the context of quantile regression}.'
This is due to the fact that the regression estimates are not obtained via the
convenient and tractable least squares estimation method.
However, with our approach it is possible to extend statistical methods which apply
to mean regression in mixed models to QR. We drop the assumption that the random errors
have to be normally distributed and only require that they are conditionally independent, and are also
independent from the random effects. 

The extension of our method to mixed models is, in  principle, straightforward.
Let the residuals be $\mathbf{u}=\mathbf{y}-X^T\bm\beta_q - Z^T\mathbf{v}$.
Similarly to the fixed effect model, we assume the following hierarchical model
\begin{eqnarray*}
 u|\lambda, \mathbf{v} &\sim& N((1-2q)\lambda, \lambda)\\
 \lambda &\sim& Exp(2q(1-q))
\end{eqnarray*}
and as a result, 
$$\lambda^{-1}|y, \mathbf{v} \sim IG(|u|^{-1}, 1)\,.$$

The key difference relative to the fixed effect model is that in order for $\lambda^{-1}$ to 
follow an inverse Gaussian distribution, we have to condition not only on $y$, but also on
$\mathbf{v}$.
To justify plugging in the best linear unbiased predictor (BLUP) for $\mathbf{v}$, we rely on \cite{Gunawardana:2005} and their 
generalization of the EM algorithm, called GAM (Generalized Alternating Minimization).
Like the EM algorithm, GAM consists of two steps. The `backward' step generalizes the M-step, and the
`Forward' step generalizes the E-step.
In the case of our scale-mixture of normal distributions model, $\bm\beta$ and $K$ are the parameters
and we treat $\bm\lambda$ and $\mathbf{v}$ as the missing data.
Because the complete data likelihood can be written in closed-form, the backward step in our case
is identical to the M-step, and $\hat{\bm\beta}$ 
and $\hat{K}$ are the maximum likelihood estimators given the current imputed values of $\bm\lambda$ and $\mathbf{v}$.

Regarding the Forward step,  \cite{Gunawardana:2005} refer to a probability 
distribution function, $Q_C$, as `desired' if it has the properties that (i) the maximum likelihood is obtained
with the observed data, and (ii) it reduces the Kullback-Leibler divergence relative to the previous iteration.
They denote the set of desired distribution functions by $\mathcal{D}$.
The objective in the forward step if to find $Q_C\in\mathcal{D}$ such that
\begin{equation}\label{KL}
 D_{KL}(Q_C^{(t+1)}||P_{C}(\hat{\bm\beta}^{(t)},\hat{K}^{(t)})) \le D_{KL}(Q_C^{(t)}||P_{C}(\hat{\bm\beta}^{(t)},\hat{K}^{(t)}))\,,
\end{equation}
where $P_{C}(\hat{\bm\beta}^{(t)},\hat{K}^{(t)})$ is the member of the parametric family of the complete data
likelihood, evaluated at the MLE's after iteration $t$.
Note that the objective in the EM algorithm is to find a $Q_C$ which minimizes the KL divergence 
on the left-hand side of
(\ref{KL}), while in order to guarantee the convergence of a GAM procedure it is sufficient to find any
desired distribution which reduces it.

While finding a simultaneous update for $\mathbf{v}$ and $\bm\lambda$ may be intractable, 
the Forward step can be performed in two steps, and the convergence of GAM will still hold.
Let $\mathcal{D}_{|\mathbf{v}}\subset\mathcal{D}$ be the set of desired distributions which satisfy (\ref{KL})
while holding $\bm\lambda$ fixed at their current value. Clearly, the BLUP not only reduces the KL
divergence, but it actually minimizes it, making the BLUP
 the optimal next estimate for $\mathbf{v}$, given $\bm\lambda$.
Now, let $\mathcal{D}_{|\bm\lambda}\subset\mathcal{D}$ be the set of desired distributions which satisfy (\ref{KL})
while holding $\mathbf{v}$ fixed at their current value (the BLUP). Then, given $\mathbf{v}$ and
$y_i$, the best update for $\lambda_i^{-1}$
is $|u_i|^{-1}$, as we have shown previously in the fixed-effect model.

This two-step approach is
valid because (i) any distribution function obtained from a projection of $\mathcal{D}$ to a subspace is also
a valid candidate for $Q_C^{(t+1)}$ when using $\mathcal{D}$, because GAM does not require finding the minimizer 
-- just an improvement
with respect to the previous iteration; and (ii) with each of the two projections into $\mathcal{D}_{|\mathbf{v}}$ and 
$\mathcal{D}_{|\bm\lambda}$ the conditions of the GAM convergence theorem in \cite{Gunawardana:2005}, hold.

The implementation of our method is simple. Because the scale-mixture of normals led to a
conditional normal distribution of $(\bm\beta^T_q, \mathbf{v}^T)$
 the estimation of the QR model parameters in the M-step
can be done by using existing tools which were built for the mean-model setting with normal errors,
such as \texttt{lm()} \cite{R:2018} for fixed effect models, or  \texttt{lmer()}
\cite{lme4:2015} for mixed models, or equivalent procedures in other languages such as SAS and Stata.
This also facilitates the estimation of the matrix $K$  when fitting mixed models.

\section{Simulations}\label{sec.simulation}
Table \ref{sims} in Appendix \ref{sec.appendix} contains details regarding
25 different simulation scenarios, performed in order to assess the performance of QREM in terms of
bias and variance of regression parameter estimates.
In some simulations the error variance did not satisfy the usual mean-model regression assumptions,
namely, being normally distributed and uncorrelated with the predictors.
In scenarios 14-18 the error variances depend on a predictor, and in 19-22 the error terms were sampled
from a skewed distribution (log-normal). 

We compared the estimated regression coefficients with those obtained from the \texttt{quantreg} package
\cite{quantreg:2018},
which uses a different estimation approach (namely, direct
minimization of the loss function
in \ref{objectiveFunction}.)
Since our model is derived from the same loss function it is expected that the two methods would
give similar estimates, and this is confirmed by our simulations. Indeed, the parameter
estimates from both methods are nearly identical (the small differences are attributed to the chosen
tolerance level of the computational methods and the fact that
empirical quantiles are not uniquely defined).
The variances of the regression parameter estimates from the two methods, however, are quite different.
Recall that we obtain the asymptotic covariance of $\hat{\bm\beta}$ by using Bahadur's representation,
which requires the estimation the density of the residual $u_i$ at 0. To do that, we use
kernel density estimation, as implemented in the \texttt{KernSmooth} package \cite{KernSmooth}.
See \cite{Deng2014DensityEI} for a review of kernel density estimation packages.
The  \texttt{quantreg} package computes confidence intervals using the inversion of a rank test, per
\cite{koenker1994confidence}.
Figure \ref{sdbeta1} shows $\hat\sigma_{\hat\beta_1}$ for $q\in\{0.05,0.1,\ldots,0.95\}$,
using QREM (blue squares, left), the bootstrap (dark-red triangle, middle), and \texttt{quantreg}
(orange circles, right). The horizontal yellow lines above each quantile represent the true estimate
of the standard deviation of $\hat\beta_1$, 
as obtained from  the asymptotic Bahadur-type estimation using the true density of $u_i$ at 0.
Our estimator has a smaller sampling variance across all scenarios and all quantiles.
Figures \ref{sdbeta1sim15} and \ref{sdbeta1sim21}  in Appendix \ref{sec.appendix} 
show the estimated standard deviations of $\hat\beta_1$ obtained from QREM
and \texttt{quantreg} for scenarios 15 and 21, respectively.
Again, the sampling variance of $\hat\sigma_{\hat\beta_1}$ obtained from QREM is smaller than
the one from \texttt{quantreg}, and especially near the edges ($q<0.15$ and $q>0.85$.)
These results show that while on average the coverage probability of the two methods is
expected to be similar, the smaller variance obtained from the QREM procedure imply that results
from this method are more stable and provide more reliable inference. The wider range
of variance estimates obtained from \texttt{quantreg} imply that this method is much more likely
to lack power or to be over-powered. 

\begin{figure}
  \begin{center}
 \includegraphics[ scale=0.8]{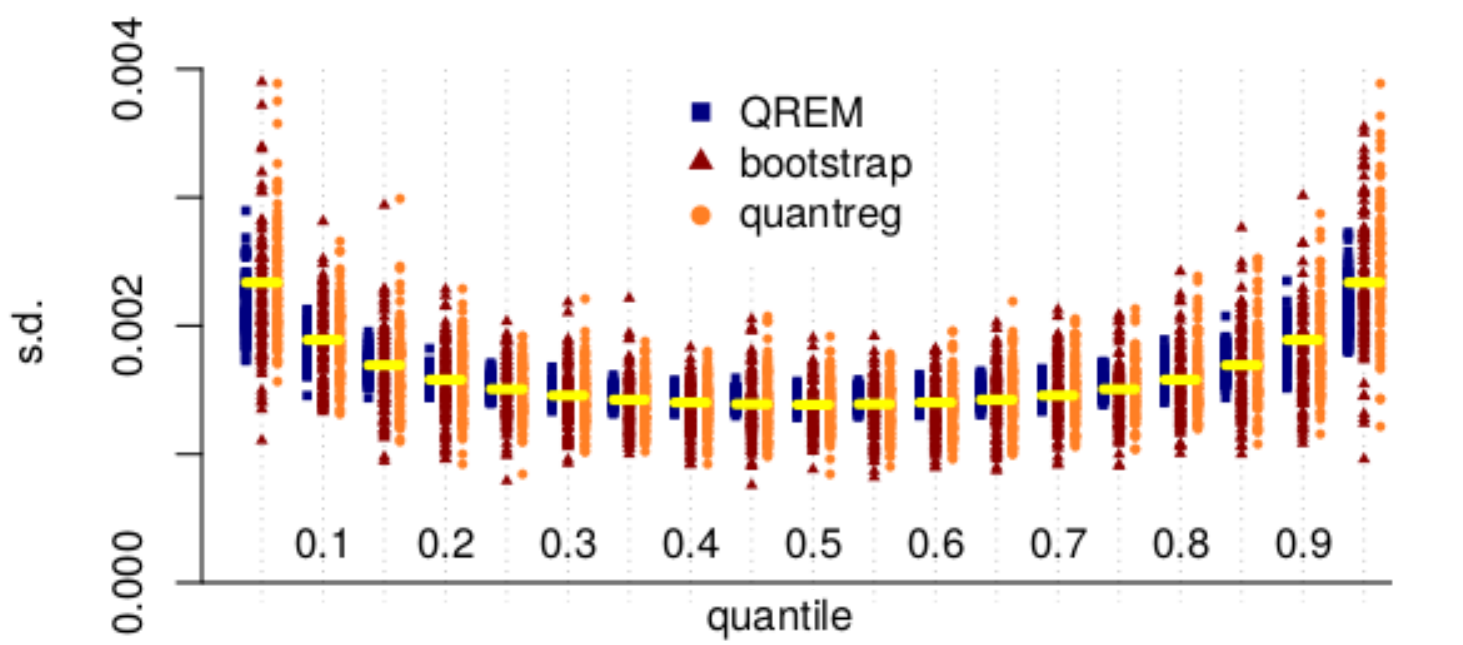}
\end{center}
\caption{QR simulation: $\hat\sigma_{\hat\beta_1}$ for $q\in\{0.05,0.1,\ldots,0.95\}$ for
$y\sim N(-3+x, 0.1^2)$.}\label{sdbeta1}
 \end{figure}

\begin{figure}[t!]
  \begin{center}
 \includegraphics[trim={1cm 0.5cm 1cm 2cm},clip,scale=1]{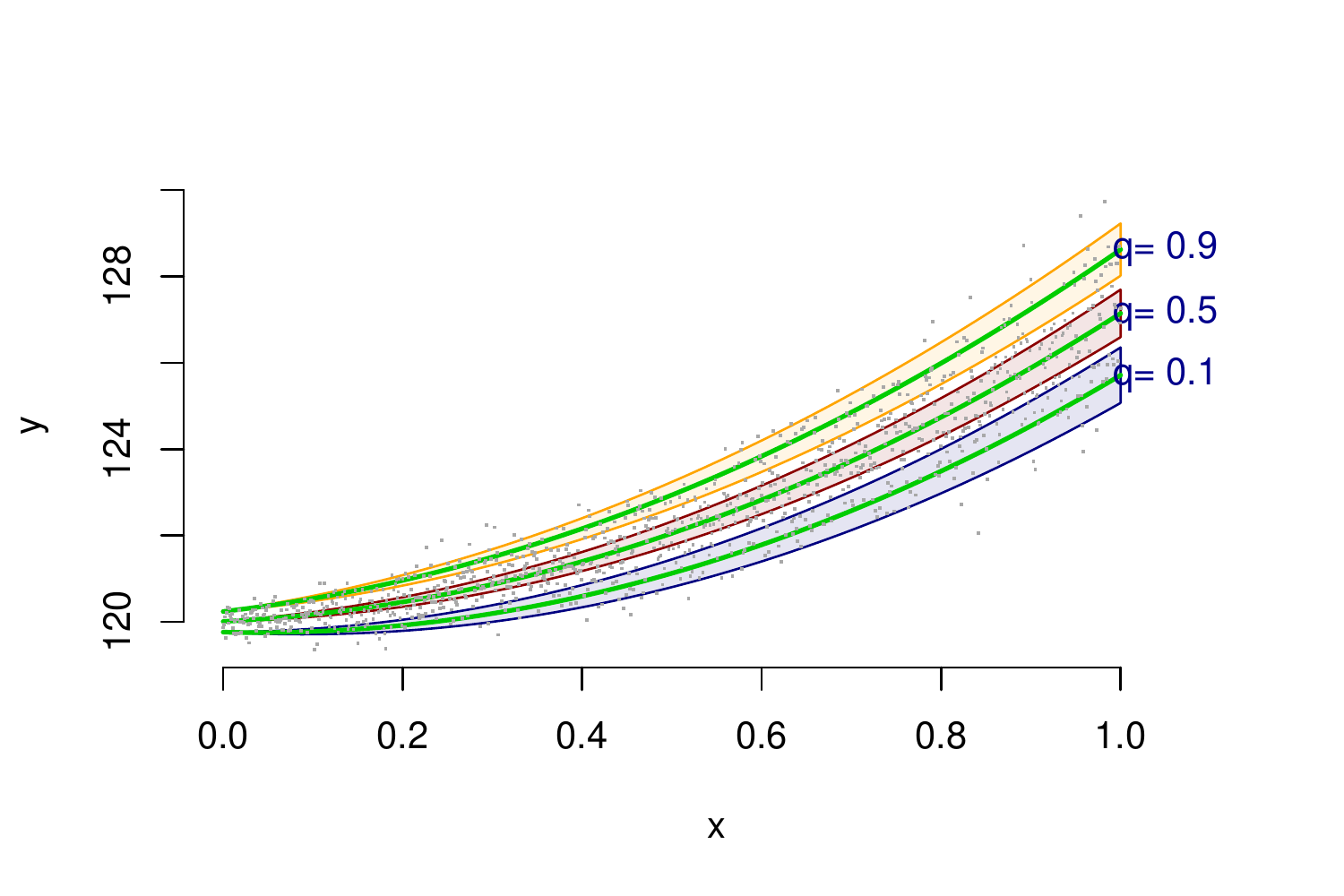}
\end{center}
\caption{QR simulation - the 0.1, 0.5, and 0.9 quantiles
with 90\% confidence intervals, when the true model is $y \sim N(6x^2+x+120, (0.2+x)^2)$.}\label{quadraticQ}
 \end{figure}

 We also simulated data in which the response depends on predictors in non-linear ways (simulations 23 and 24).
 For example, for simulation 23, in
Figure \ref{quadraticQ} we show results when $y \sim N(6x^2+x+120, (0.2+x)^2)$ so that the relation between
the mean of $y$ and $x$ is a quadratic function, and the standard deviation increases linearly
with $x$.
The figure shows the 0.1, 0.5, and 0.9 quantiles with 90\% confidence bands.
For all values of $x$, the three quantile regression curves are significantly different
from the each other. 
To assess goodness of fit, we use the quantile-quantile plot construction
 as described in Section \ref{sec.properties}.
For example,  Figure \ref{quadraticQQQ} shows the QQ plot for the predictor $x$ when
 $q=0.1$: on the left hand side, we fitted a linear model, $y\sim x$,
 and on the right hand side 
the fitted model is $y\sim x  + x^2$ (the true model). The QQ plot suggests that, for the 10th percentile, 
 the linear model is inadequate,
while the quadratic one fits very well. The goodness of fit, as defined in Section \ref{sec.properties}, is
is $G=1,135$ for the linear model, and $G=828$ for the quadratic model, again providing evidence that in this case a quadratic model
provides a better fit.
  
 \begin{figure}[h!]
  \begin{center}
 \includegraphics[trim={1cm 4cm 4cm 1cm},clip,scale=0.6]{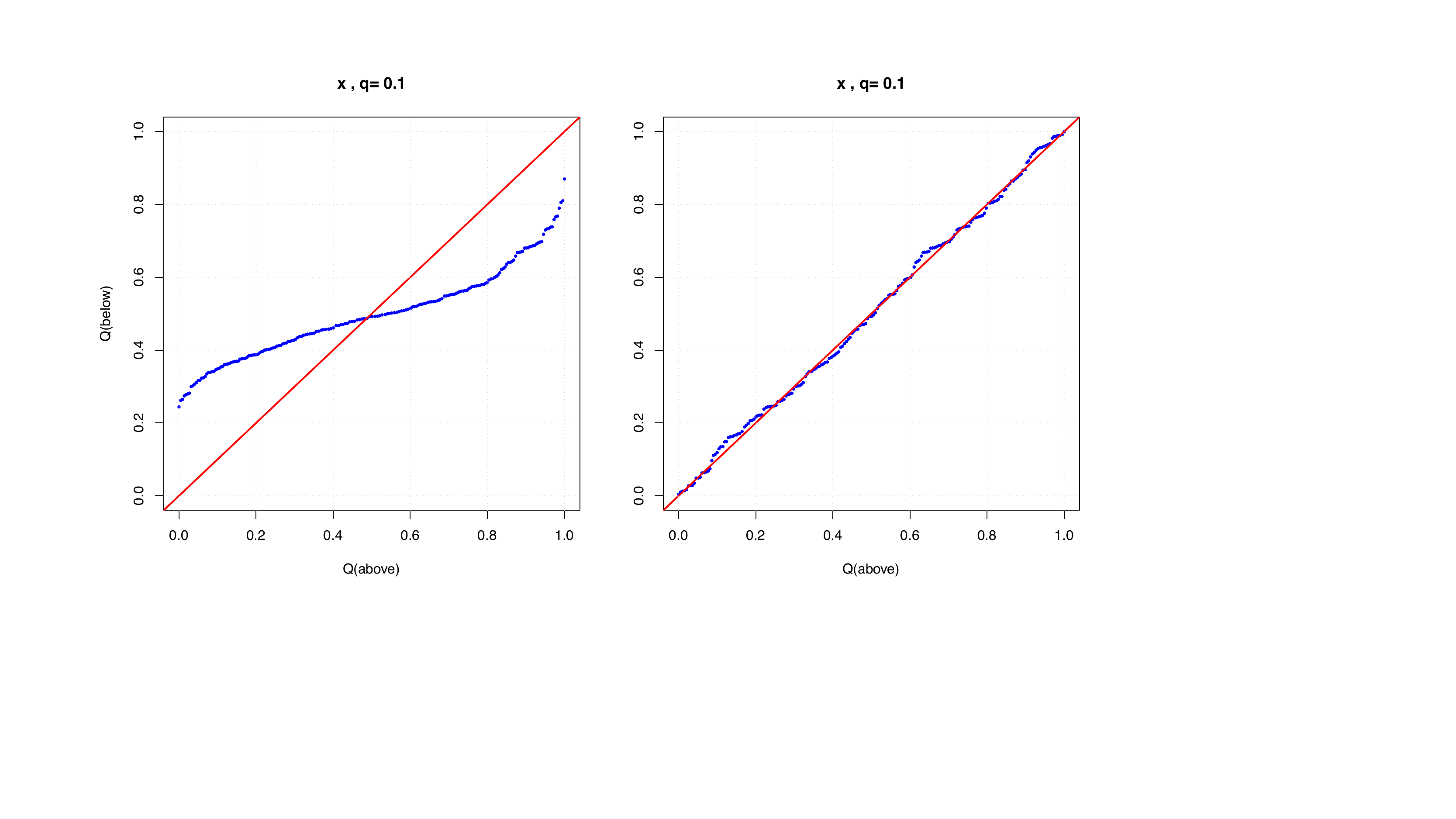}
\end{center}
\caption{QR simulation - the true model is $y \sim N(6x^2+x+120, (0.2+x)^2)$. Showing
quantile-quantile plots for the linear predictor $x$ when fitting a linear  (left) and the correct, quadratic model (right),
for $q=0.1$.}\label{quadraticQQQ}
 \end{figure}

It is possible that a model would fit well for certain quantiles, but not for others. The final example 
in this section demonstrates this point and shows an effective way to visualize multiple QQ-plots, when fitting the
same quantile regression model for different $q$s. 
We simulated 10,000 points from a bivariate uniform distribution on $[0,1]\times[0,1]$ as our two predictors, $x_1$ and
$x_2$, and generated the response using the interaction of the predictors, so that
 $y\sim N(4x_1x_2, (0.1+0.2x_1)^2)$ (simulation 24 in the Appendix).
For each $q\in(0.05,0.1,\ldots,0.9,0.95)$ we fitted two models - one additive, $y\sim x_1+x_2$, 
and one with an interaction term, $y\sim x_1+x_2+x_1x_2$.
From each fitted model we obtained the theoretical and empirical quantiles, $Q_{q,t}(x_1)$,  and $Q_{q,e}(x_1)$,
respectively, with respect to $x_1$. 
Recall that under the correct model the plot of the theoretical versus empirical quantiles should lie close to the
$45^{\circ}$ line. So, for a fixed $q$, and some $\xi_1$ in the range of $x_1$ we define
$r_q(\xi_1)=n_{q,e}(\xi_1)/n_{q,t}(\xi_1)$ where $n_{q,e}$ and $n_{q,t}$ are the numbers of 
empirical and theoretical quantiles that are smaller than $\xi_1$.
An adequate model gives $r_q(\xi_{1})\approx 1$ for each value of $\xi_1$.
For  each $q$ we use $L$ (e.g., 20) equally spaced values in the range of $x_1$, denoted by $\xi_{1j}$, 
and obtain $r_q(\xi_{1j})$.  We plot an array of rectangles with colors corresponding to the values of 
$r_q(\xi_{1j})$, so that the columns in the array correspond to the quantiles, and the rows to
$\xi_{1j}$.
This yields a heatmap, as depicted in Figure \ref{QxQy}, to which refer as a `flat' QQ plot, since for each $q$
we convert the two-dimensional QQ plot to a single column in the heatmap.
Figure \ref{QxQy}-A shows an ideal `flat' QQ-plot. The interaction model fits very well for each $q$.
In contrast, Figure \ref{QxQy}-B shows that although the additive model suggests 
a good fit for values around $q=0.5$, it is inadequate for most $q$'s.

 \begin{figure}
 \advance\leftskip-2cm
  \includegraphics[width=1.25\textwidth]{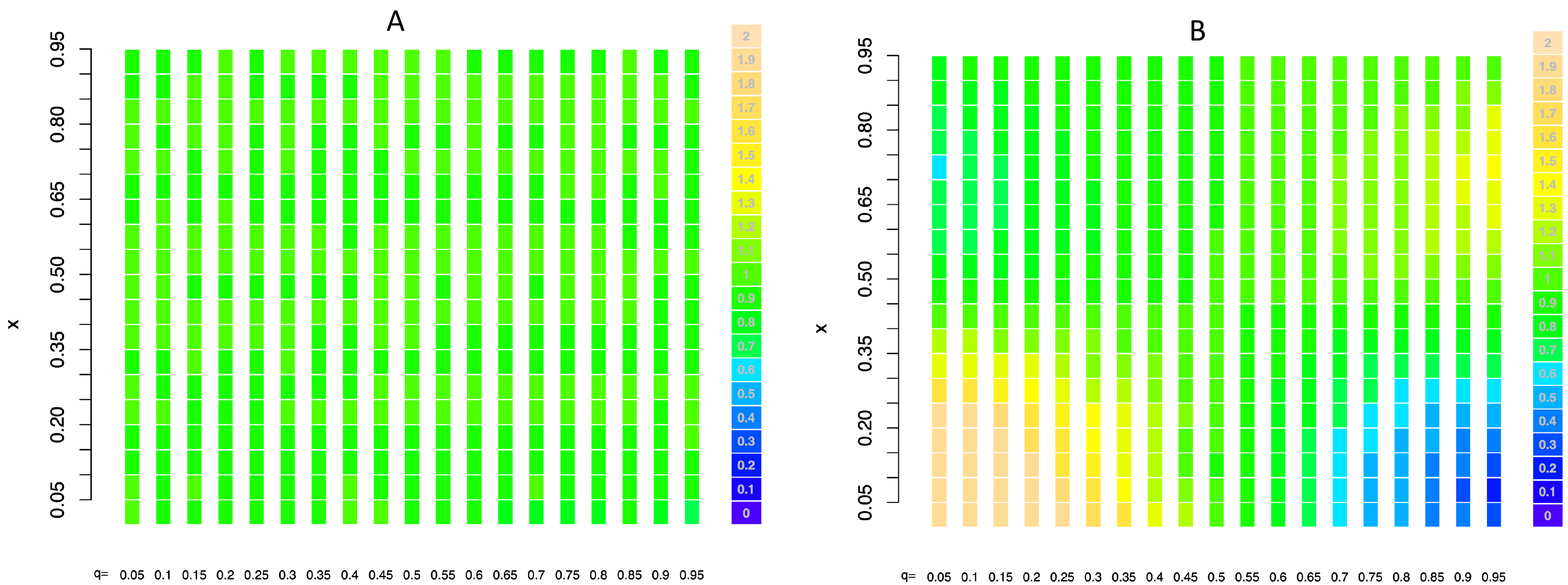}
 \advance\rightskip1cm
\caption{a `flat' QQ plot for $q\in(0.05,0.1,\ldots,0.9,0.95)$, when the true model is
$y\sim N(4x_1x_2, (0.1+0.2x_1)^2)$. A: fitting the correct (interaction) model. B. Fitting an additive (incorrect)
model, $y\sim x_1+x_2$.}\label{QxQy}
 \end{figure}

 Finally, we also simulated data from mixed models. In simulation 25 we generated
 a random samples of $100$ independent subjects, and each subject was observed at four
 time points. Within subject the observations were correlated.
That is, we used 
 a variance component model, where $y_{it}=2 + x_{it} + u_i + \epsilon_{it}$, and the random effect, $u_i$,
 was distributed as $N(0,0.5^2)$, and the random errors as $ \epsilon_{it}\sim N(0, 0.1^2)$. 

To obtain confidence intervals for the parameter estimates we used a bootstrap approach and fitted
the QR model 1,000 times for each $q=\{0.1,\ldots,0.9\}$, each time drawing a random sample (with replacement) from the subjects.
The parameter estimates and the bootstrap standard errors were very accurate for all deciles: 
the average bias for $\beta_1$ was 0.0008, and the 95\% coverage 
probability was $0.95\pm 0.005$.

\section{Case Studies}\label{sec.casestudy}

\subsection{BMI and Microbiome Data}\label{BMI}
An operational taxonomic unit (OTU) is an operational definition used to classify groups of closely related organisms.  OTUs have been the most commonly used units of microbial diversity, especially when analyzing small subunit 16S rRNA marker gene sequence datasets. In this case study we illustrate a quantile regression approach to the analysis of microbiome compositional covariates using BMI and OTU data from \cite{wu:2011}.    \cite{bar2018scalable} applied SEMMS (Scalable EMpirical Bayes Model Selection) with BMI as a continuous response with compositional OTU covariates, using the normal model and obtained the same four genera reported by the LASSO-based variable selection method in \cite{Lin13082014} (\emph{Acidaminococcus, Alistipes, Allisonella, and Clostridium}).  \cite{bar2018scalable} also demonstrated the application of SEMMS using a categorical version of BMI with 3 levels: normal [18.5, 25) (n=30), overweight [25.5, 30) (n=25), and obese $\ge$30 (n=10) and found six bacteria associated with obesity relative to normal (\emph{Acidaminococcus, Alistipes, Allisonella, Butyricimonas, Clostridium, and Oxalobacter}) and five with overweight relative to normal (\emph{Anaerofilum, Faecalibacterium, Oscillibacter, Turicibacter, and Veillonella}). A logistic regression approach using SEMMS for the obese group (n=10) when the baseline consisted of the non-obese subjects (BMI$<$30, n=86) found five genera (\emph{Acidaminococcus, Alistipes, Allisonella, Dialister, and Dorea}). The fact that \emph{Acidaminococcus, Alistipes, Allisonella, and Clostridium} are a subset of the genera found to be associated with obese individuals but have no overlap with the ones associated with overweight suggest that different BMI levels are associated with different bacteria, and thus hint that quantile regression will give a more interpretable analysis than the conditional mean regression model.

Since microbiome abundance data is compositional, to apply our method directly, without changing the model to account for the sum constraint, we perform the log ratio transformation and replace the matrix $X$ with $X^K =\log(x_{ij}/x_{iK})$, where the $x_{iK}$ are replications of a reference bacteria genera. The data contains many zero counts, so \cite{bar2018scalable, Lin13082014} replace them with 0.5 before converting the data to be in compositional form.  As in \cite{Lin13082014,bar2018scalable} we use a subset of 45 bacteria which had non-zero counts in at least 10\% of the samples ($N=96$). The omitted genera have minimal contribution to the overall distribution of the proportions.

\begin{figure}[b!]
  \begin{center}
 \includegraphics[trim={0 1cm 0 2cm},clip,scale=0.3]{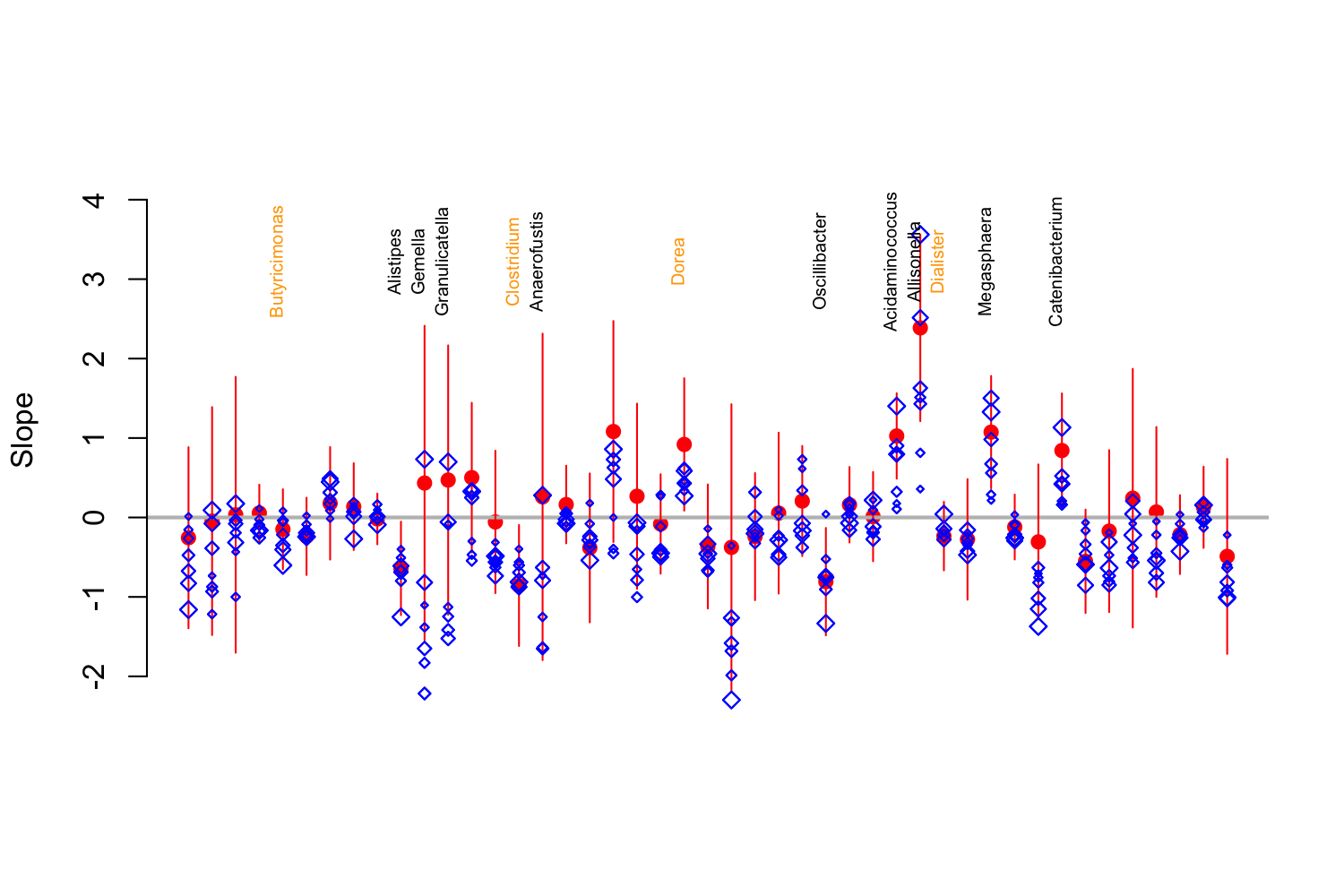}
\end{center}
\caption{Mean-model estimates and their confidence intervals in red, and the quantile regression coefficients for $q=0.2, \ldots, 0.8$ in blue (diamonds, where the increased size corresponds to larger $q$).}\label{BMIplot1}
 \end{figure}

\begin{figure}[t!]
  \begin{center}
 \includegraphics[trim={0 2.5cm 0 2cm},clip,scale=1.1]{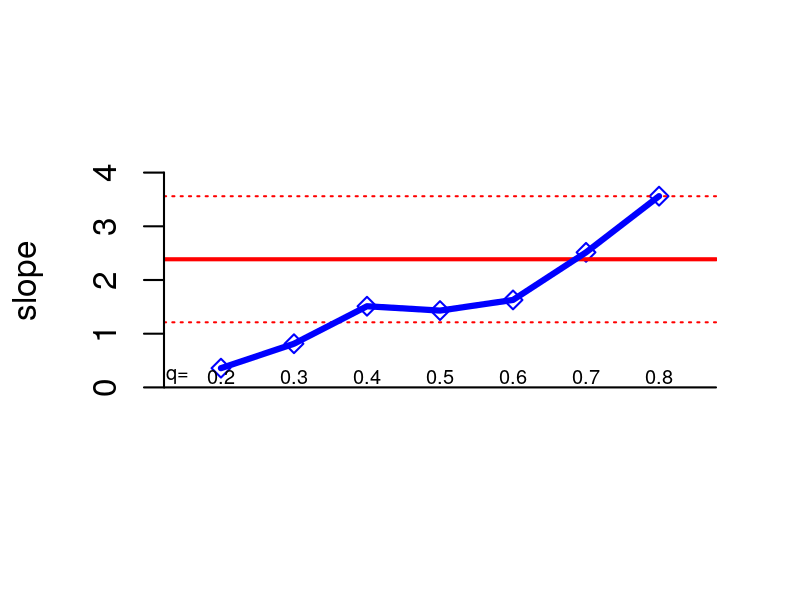}
 \includegraphics[trim={0 3cm 0 2cm},clip,scale=.4]{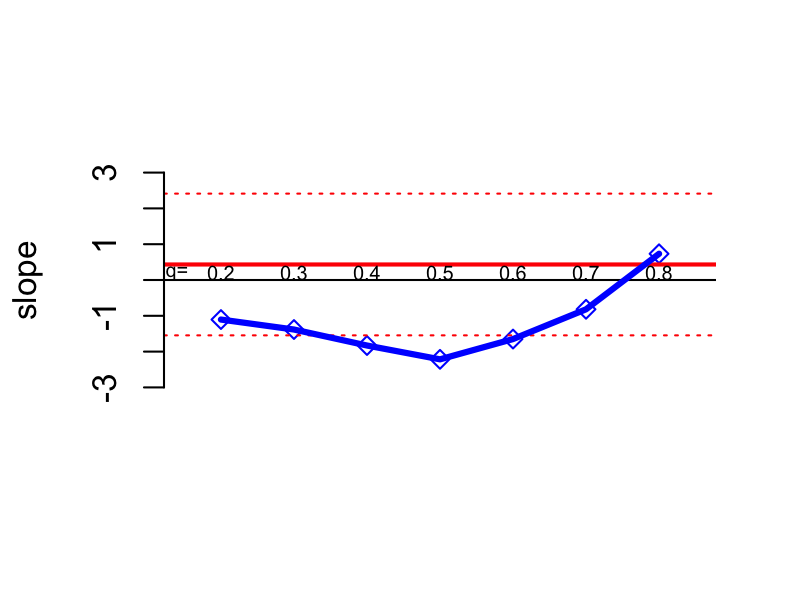}
\end{center}
\caption{Mean-model estimates and their confidence intervals in red and quantile regression coefficients for $q=0.2, \ldots, 0.8$ for \emph{Allisonella} (top) and \emph{Gemella} (bottom).}\label{BMIplot2}
 \end{figure}

Figure \ref{BMIplot1} shows the mean-model estimates and their confidence intervals in red, and the quantile regression coefficients for $q=0.2, \ldots, 0.8$ in blue (diamonds, where the size corresponds to $q$). This is done one OTU at a time with no adjustment for multiple testing. Most confidence intervals are quite large, and contain all the quantile regression estimates.  The quantile regression analysis leads to a number of interesting points.  For example in Figure \ref{BMIplot2},  \emph{Allisonella} has a strong mean effect ($\sim 2.5$) but even stronger quantile effect for $q=0.8$. Notice also that \emph{Gemella, Granulicatella, Anaerofustis} have a small positive mean-effect estimate, but the estimates for the lower quantiles are negative and are outside the mean confidence interval.   For some OTUs the effect for all quantiles has the same sign, maybe suggesting that these OTUs should be considered as significant even if the adjusted p-value for the mean effect is not sufficiently small. For example, \emph{Acidaminococcus, Megasphaera, Catenibacterium} (all positive) and \emph{Alistipes, Clostridium, Oscillibacter} (all negative)  have a fairly small spread of quantile regression estimates, perhaps suggesting that they are in fact significant, even if their p-value after adjustment is not small enough.

The quantile regression results are consistent with the previous findings in the microbiome literature.  The relative proportion of phylum \emph{Bacteroidetes} (containing the \emph{Alistipes} genus) to phylum \emph{Firmicutes} (all other genera found) is lower in obese mice and humans than in lean subjects \cite{ley:2006}.  In \cite{andoh2016comparison}
\emph{Gemella, Granulicatella} were both found to be increased bacteria genera in obese people but not as a differential between lean and obese people, this is reflected in the quadratic behavior seen for \emph{Gemella} in Figure \ref{BMIplot2}.  Furthermore, the family \emph{Veillonellaceae} (containing the \emph{Acidaminococcus} and \emph{Allisonella} genera) are positively correlated to BMI \cite{wu:2011}. The two additional genera found beyond \cite{Lin13082014}, \emph{Catenibacterium} and \emph{Megamonas}, have been found to be positively associated with BMI. \cite{chiu2014} identify \emph{Megamonas} as a genus that differentiates between low and high BMI in a Taiwanese population and \cite{turnbaugh2009} found that the \emph{Catenibacterium} genus increased diet-induced dysbiosis for high-fat and high-sugar diets in humanized gnotobiotic mice.

\subsection{Daily Temperature Data}\label{temp}
We demonstrate fitting a quantile regression mixed-model using our approach to daily temperature
data from four U.S. cities (Chicago, Ithaca, Las Vegas, and San Francisco). Minimum and maximum
temperatures (in Fahrenheit) from January 1, 1960 to December 31, 2010 were obtained from the
National Oceanic and Atmospheric Administration website (https://www.noaa.gov/).
We fit a  model which allows for the temperature characteristics to change linearly
over time, while accounting for the cyclical nature of
temperatures. To allow for extra variability in the data in addition
to between-days variability, we include a random intercept for each month of the year.
Let $y_{ijk}$ denote the daily maximum (minimum) temperature in month
$j$ and day $k$ of year $i$ and define
$S_k=\sin(2\pi(k-11)/365-\pi/2)$, for $k=1,\ldots,365$ (or $366$), to
represent the annual sinusoidal variation of daily temperatures,
with a minimum at (approximately) the  shortest
day of the year in the northern hemisphere,  December 21, and a maximum at around June 21.
Our main objective is to estimate linear trends in temperatures over
the 50 year period. For each city we fit six
models with linear predictors of the form 
$$ \eta_{ijk}=\mu+\beta_1i+\beta_2S_k + v_j \,.$$
for the 10-th percentile, the mean, or the 90-th percentile of the daily minimum or maximum
temperature, and where $v_j$, $j=1,\ldots,12$ is a (random) effect of month $j$. 
The \texttt{lmer} function \cite{lme4:2015} was used to
fit the parameters in each iteration of the EM algorithm (in the M-step.)

The estimates of $\beta_1$ from each model are summarized in Table \ref{climatechange} for all four cities.
All cells with total temperature increase of more than 1 degree are statistically significant.
For example,  in San Francisco the minimum daily temperatures rose on average
by only 0.1 degrees from 1960 to 2010, but
the  90-th percentile  of the maximum daily temperature
 increased by 4.7 degrees. The mean maximum daily
temperatures increased by 3.4 degrees in those 50 years.
In Las Vegas, on the other hand, the 90-th percentile of the maximum daily temperature increased by 1 degree,
but the 10-th percentile of the minimum daily temperature  increased by almost 9 degrees.
In both Chicago and Las Vegas, the largest increase for both minimum and maximum daily temperatures has been
to the 10-th percentile, which means that the 10\% `cooler than usual for the time of year' days were quite warmer in 2010
than in 1960.

\begin{table}[ht]
\centering
\begin{tabular}{|l|ccc|ccc|ccc|ccc|}
  \hline
  & \multicolumn{3}{c|}{Chicago} & \multicolumn{3}{c|}{Ithaca} & \multicolumn{3}{c|}{Las Vegas} & \multicolumn{3}{c|}{San Francisco} \\
  & \multicolumn{3}{c|}{Illinois} & \multicolumn{3}{c|}{New York} & \multicolumn{3}{c|}{Nevada} & \multicolumn{3}{c|}{California} \\
  & 10\% & M & 90\% & 10\% & M & 90\% & 10\% & M & 90\% & 10\% & M & 90\% \\
  \hline
min temp.  & 4.3 & 2.9 & 1.6 & 2.1 & 1.8 & 1.3 & 8.7 & 7.8 & 6.9 & 0.1 & 0.1 & 0.2 \\
max temp.  & 1.9 & 1.2 & 0.3 & 1.1 & 1.2 & 1.5 & 2.0 & 1.5 & 1.0 & 2.5 & 3.4 & 4.7 \\
   \hline
\end{tabular}
\caption{The total change in temperature from 1960-2009 in four U.S. cities. In each city we estimate the
change in the minimum and maximum temperatures in terms of the 10-th percentile, the mean, and the 90-th
percentile.}\label{climatechange}
\end{table}

The fitted lines for the quantile regression models with $q=0.1, 0,9$ are shown for the Las Vegas data for 1960-64 and 2006-10
in Figure \ref{LV}. The sharp increase in minimum daily temperatures of approximately 8 degrees
between 1960-4 and 2006-10
can be seen very clearly (top).
Figure \ref{tempfitted} shows the daily minimum and maximum temperatures for Chicago and San Francisco,
 from 1985-1990, as well as the fitted lines from our quantile regression for $q=0.1$ (blue) and $q=0.9$ (red).
 The plot shows that although both cities have cyclical patterns, their shapes are quite different.
 The mixed model in which the cyclical fixed effect
 is dominant and the random month-by-month intercept allows for deviations from the sinusoidal pattern
 seems to fit both patterns quite well. For Chicago, the
 sinusoidal shape is very clear, whereas in San Francisco the pattern is skewed with a sharp decline in temperatures
 in the fall and a slower incline starting in January. It looks like late-spring and summer in San Francisco
 are a lot more likely to have maximum temperature which are much higher than the mean-model
 predicts (black curve in the bottom-right plot), and the 90-th percentile seems much more informative.

\begin{figure}
  \begin{center}
 \includegraphics[scale=0.8,angle=270]{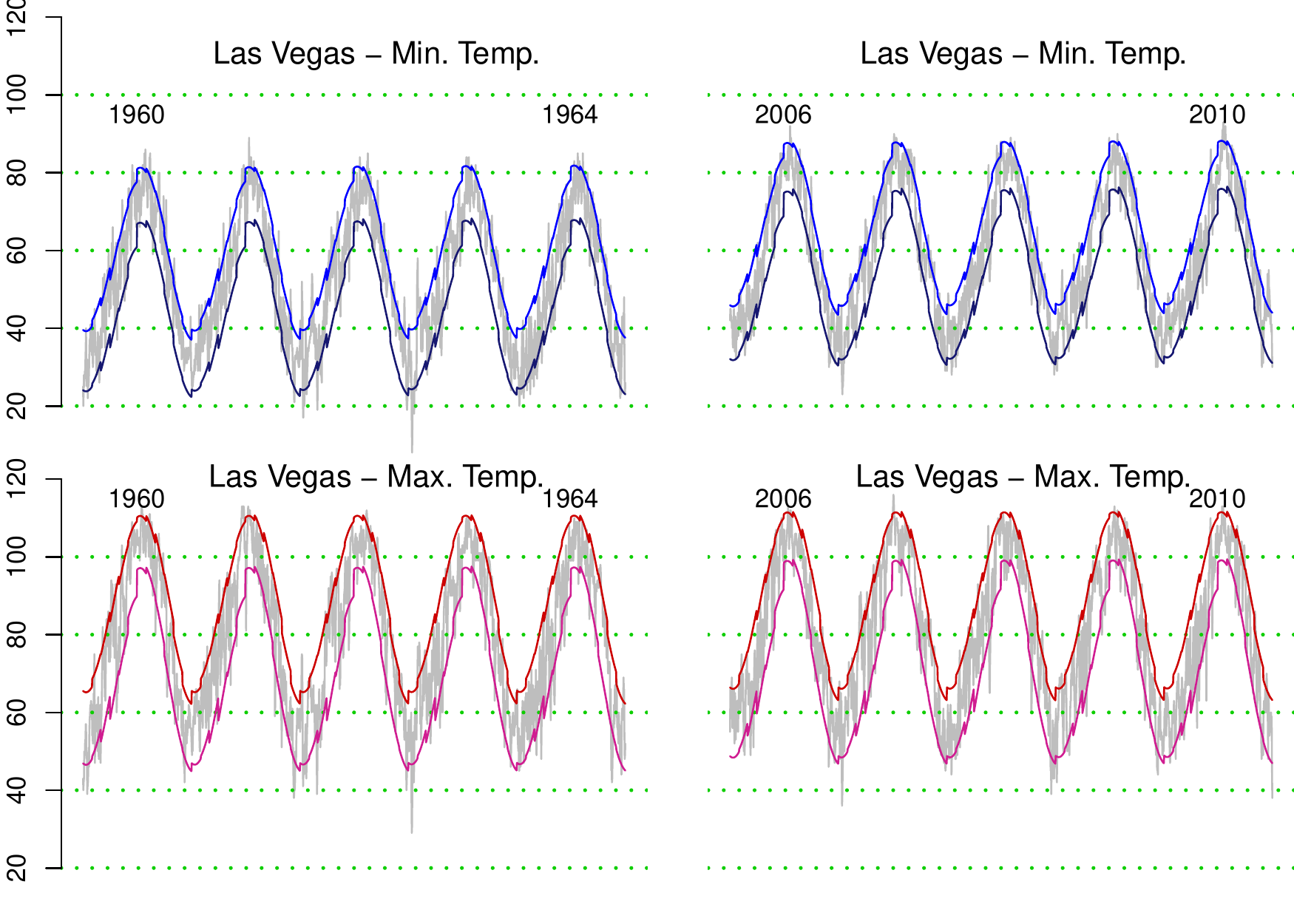}
\end{center}
\caption{The fitted lines for the quantile regression models for Las Vegas for 1960-1964 (left) and 2006-2010 (right) with $q=0.1$  and  $q=0.9$. The actual observations are shown in grey. Top: minimum daily temperature, bottom: maximum
daily temperature.}\label{LV}
 \end{figure}

\begin{figure}
  \begin{center}
 \includegraphics[scale=0.8,angle=270]{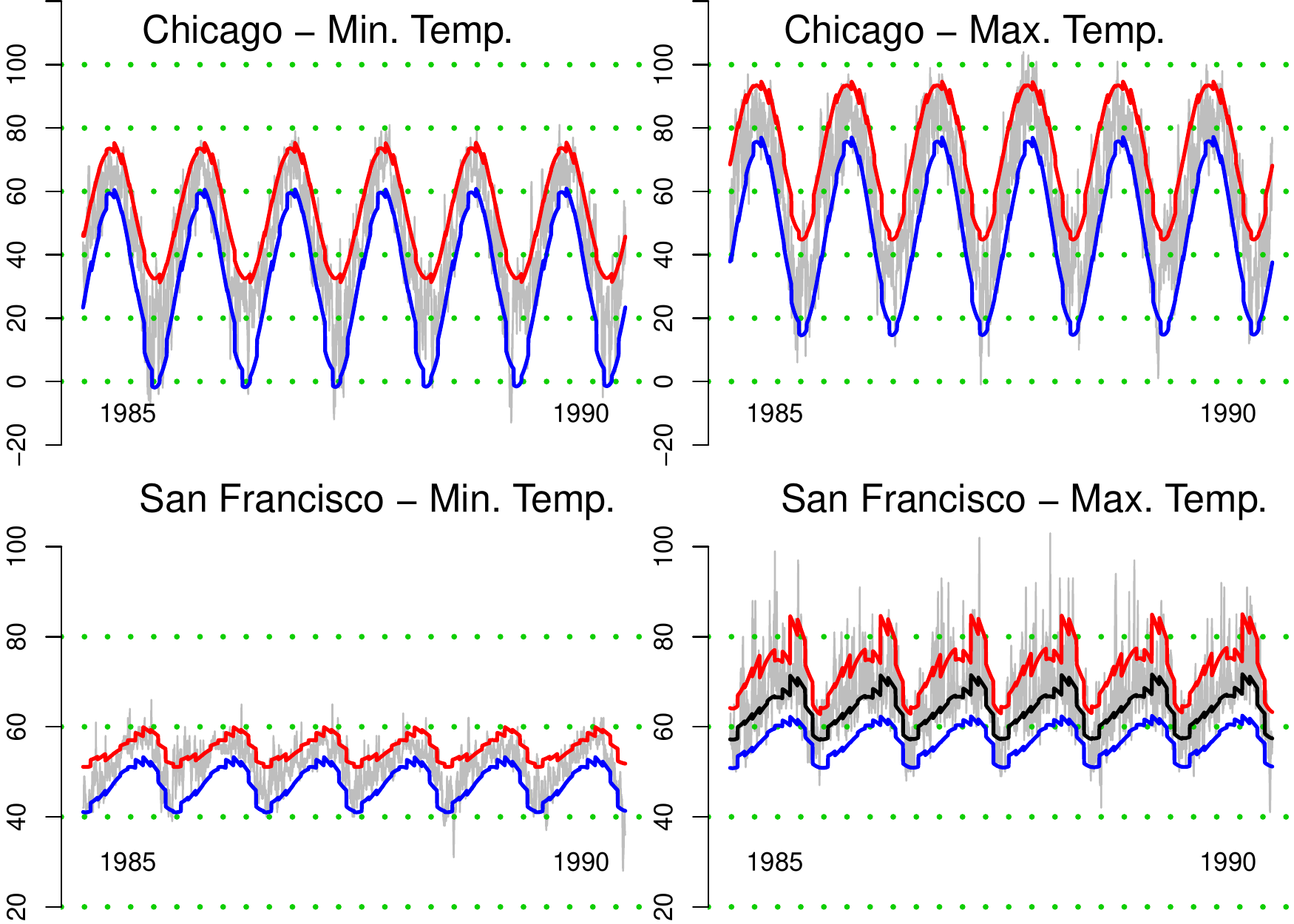}
\end{center}
\caption{Daily minimum and maximum temperatures for Chicago and San Francisco,
 from 1985-1990. Fitted lines from our quantile regression for $q=0.1$ in blue and $q=0.9$ in red.
 For max. temperatures in San Francisco we show the fitted curve from the mean model in black.}\label{tempfitted}
 \end{figure}

 To check the model fit, we produced QQ plots as demonstrated in the previous section.
 For Chicago, Ithaca, and Las Vegas, the plot for each predictor and each quantile (0.1 and 0.9) was
 along the diagonal, for both the minimum and maximum temperatures. See, for example, the QQ plot for the 
 year predictor for the maximum temperatures in Las Vegas in Figure \ref{QQtemps} on the left.
 However, for San Francisco, the QQ plots for the year variable exhibited a slightly curved pattern 
 (Figure \ref{QQtemps}, right-hand side). Looking at the time-series plot (not shown here) 
 we see that this could be due to a period of time (1984-1997) 
in which the percentage of especially high
 maximum temperatures in a single year was higher than usual.
 
 \begin{figure}
  \begin{center}
 \includegraphics[trim={1cm 8cm 8cm 1cm},clip,scale=0.7]{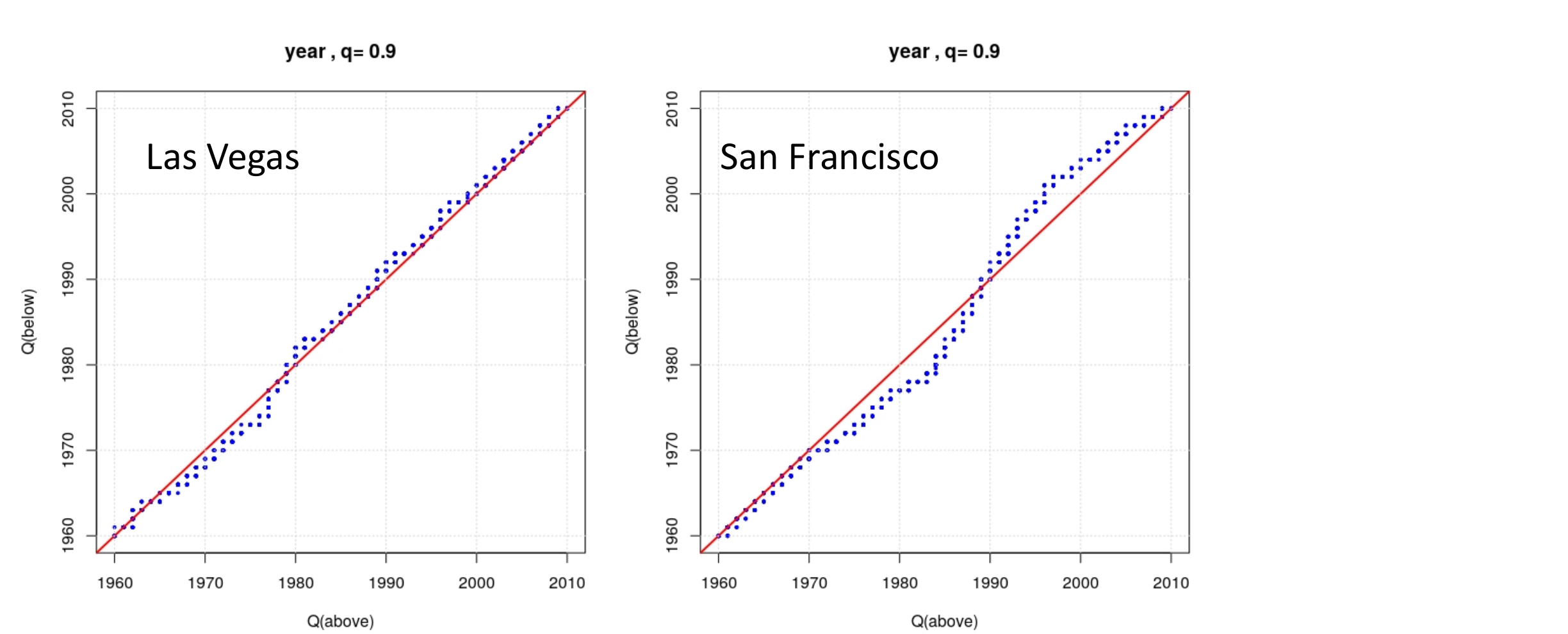}
\end{center}
\caption{Quantile-quantile plot for the maximum temperatures in Las Vegas (left) and San Francisco (right), corresponding to the `year' predictor, when fitting the quantile regression with $q=0.9$}\label{QQtemps}
 \end{figure}
 
 Figure \ref{flatQQtemps} shows the `flat' QQ plot for the maximum temperature in San Francisco,
 for $q=0.1,0.2,\ldots,0.9$, with respect to the year predictor. Using the notation from the previous section,
 the range of $r_q(\xi_{year,j})$ is $0.88-1.26$. Although most of the values are close to 1, the model 
 does not fit the data for San Francisco as well as for the other three cities (not shown here). Note that 
 the goodness of fit of the model varies by the selected quantile, $q$. In particular, the worst fit is for the median,
 where, around 1970, the predicted values are 1.26 higher than the observed median temperatures.
 
 \begin{figure}
  \begin{center}
 \includegraphics[scale=0.7]{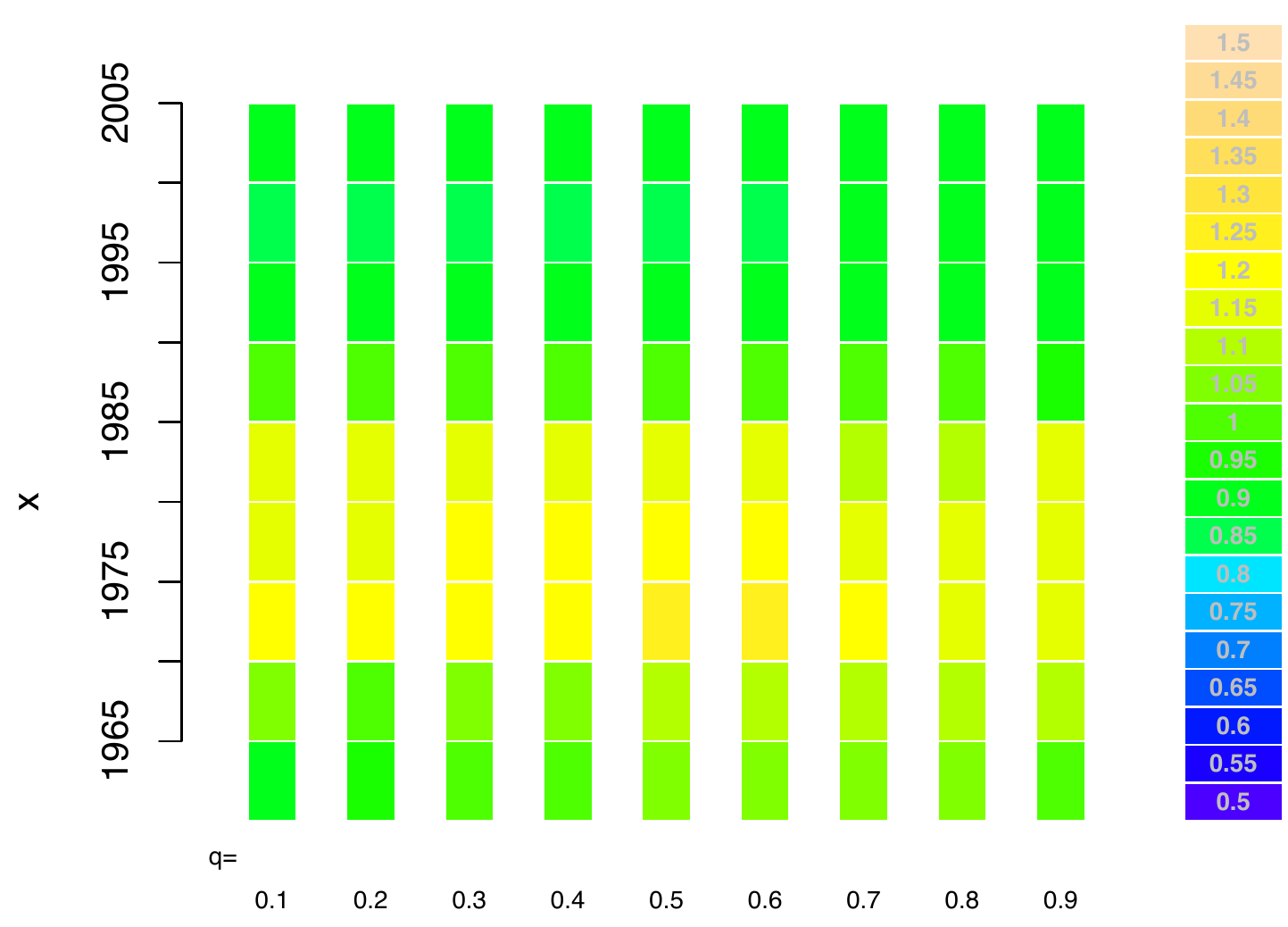}
\end{center}
\caption{A `flat' Quantile-quantile plot for the maximum temperatures in San Francisco with respect
 to the `year' predictor, for $q=0.1,0.2,\ldots,0.9$}\label{flatQQtemps}
 \end{figure}
 
\subsection{Frailty QR Model -- Emergency Department Data}
The National Center for Health Statistics, Centers for Disease Control and Prevention
conducts annual surveys to measure nationwide health care utilization.
We use the 2006 NHAMCS (National Hospital Ambulatory Medical Care Survey) data to demonstrate
fitting quantile regression models with random effects to survival-type data.
In this case, the length of visit (LOV, given in minutes) while in a hospital emergency department (ED)
is considered the `survival' time, and we define the normalized response $y=\log_{60}(LOV+1)$.
We filter the data and remove hospitals with fewer than 70 visits, which results in 21,262 ED visit records from 230
hospitals.

First, we fit a fixed-effect model which includes nine predictors of interest:
Sex, Race (white, black, other), Age (standardized to a $[0,1]$ range), Region (northeast, midwest,
south, and west), Metropolitan area (yes/no), Payment Type (Private, Government/Employer, Self, and Other),
Arrival Time (8AM-8PM or 8PM-8AM), the Day of the Week, and a binary variable (Recent Visit) to indicate
whether the patient has been discharged from a hospital in the last 7 days, or from an ED within the last 72 hours.
We fit a QR (fixed-effect) model for $q=0.025, 0.05, 0.075,\ldots,0.975$.
Then, we fit the frailty model for the same quantiles using the same nine predictors, plus
the Hospital as a random effect, and check whether the coefficients of the fixed effects change
when we account for the correlation between patients within a hospital.

\begin{figure}[b!]
  \begin{center}
 \includegraphics[trim={1cm 6cm 4cm 3.5cm},clip,scale=0.54]{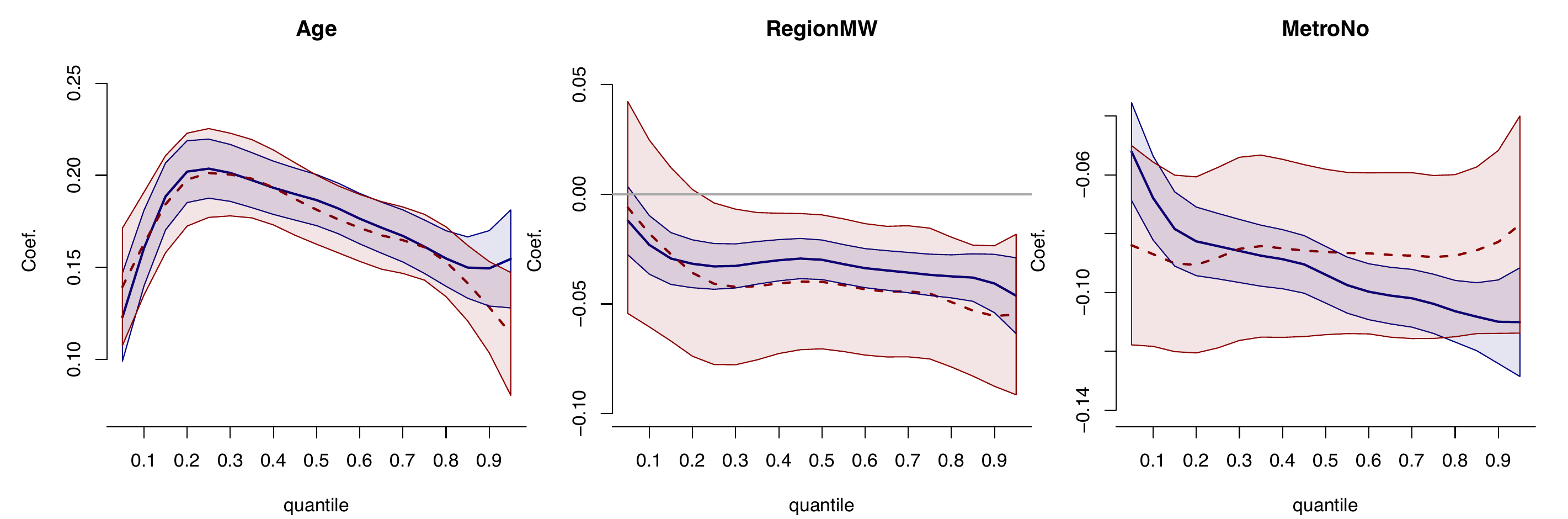}
\end{center}
\caption{ED data -- quantile regression coefficients with 95\% point-wise confidence intervals, 
from left to right, for Age, Region (=midwest, where northeast is the
 baseline), and Metro (=No, where Yes is the baseline). The blue, solid line is the for the fixed effect model
 coefficients, and the red, dashed line is for the mixed model coefficients.}\label{ERf3}
 \end{figure}
 
 Figure \ref{ERf3} shows the regression coefficients, from left to right, for Age, Region (=midwest, where northeast is the
 baseline), and Metro (=No, where Yes is the baseline). For these variables, the coefficients are remain approximately the
 same when we add the Hospital random effect. Age has a positive effect for all quantiles, and the smallest difference in LOV
 due to age is for the patients who are discharged quickly from the ED, or those who stay at the ED the longest.
 Generally, ED patients in the northeast have longer visits than the ones in the midwest (except for those
 who are discharged quickly). Similarly, ED patients in a metropolitan area stay significantly longer than in
 rural area hospitals.
 The Arrival Time and Day of Week variables are also similar whether we include the hospital random effect or not (no
 shown here.) Arriving at an ED on Monday results in a longer wait compared with Sunday (except for patients who 
 are discharged quickly), but the difference between other weekdays and the weekend is less significant for most
 quantiles. Arriving at night means a longer stay only for the patients who end up staying the longest (the upper 
 20-th percentile in the fixed effect model but no significant difference in the mixed model), but among patients
 who are discharged relatively quickly ($q\le0.2$) arriving at night actually corresponds to a shorter stay, as compared
 with day-time arrival.

Figure \ref{ERf2} shows that accounting for within-hospital correlation yields
very different results compared with the fixed-effect model. According to the fixed effect model
one might conclude that there is a significant difference between blacks and whites (left panel)
and between people with private health insurance versus people whose medical bill is `Other' (namely, 
not Private, Medicare, Medicaid/SCHIP, Worker's Compensation, or Self pay), for patients who are 
not discharged quickly, i.e., for $q>0.2$. Similarly, from the fixed effect model one might conclude that among
those who are not discharged quickly, a recent visit to a hospital or an ED will result in longer stays (for $q>0.35$).
However, the results from the mixed model suggest that these differences can be explained by variation between 
hospitals, while, within a hospital there is no significant difference in the length of visit
between black and white people, or between people with private insurance and people who `other' pay.

 \begin{figure}[b!]
  \begin{center}
 \includegraphics[trim={3cm 6cm 4cm 3.5cm},clip,scale=0.6]{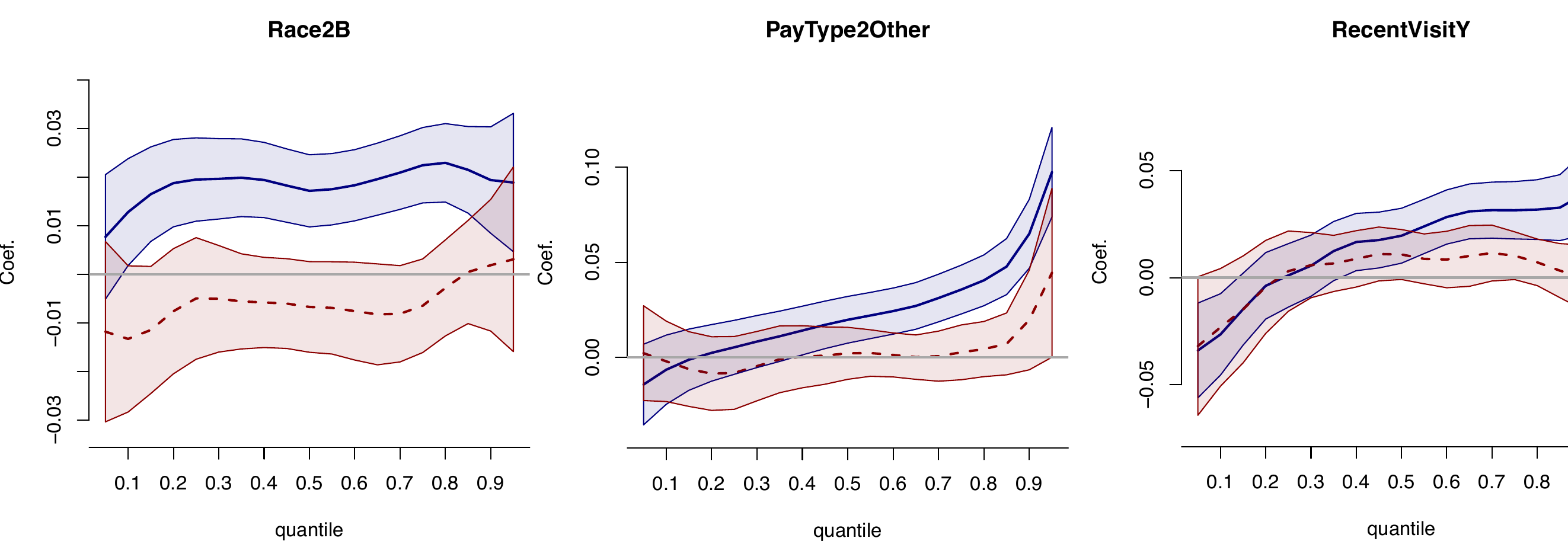}
\end{center}
\caption{ED data -- quantile regression coefficients with 95\% point-wise confidence intervals,
for Race=Black (left, with White as baseline),  Pay Type (center, for `Other' source of payment versus `Private'),
and Recent Visit (for Yes, versus the No reference.)
The blue, solid line is the for the fixed effect model
 coefficients, and the red, dashed line is for the mixed model coefficients.}\label{ERf2}
 \end{figure}

\section{Discussion}\label{sec.discussion}
The EM formulation introduced in this paper for fitting quantile
regression models offers a few advantages over existing methods
stemming from the fact that the M-step involves a mean regression via
weighted least squares. This formulation suggests an approach to fitting
mixed effects quantile regression models that can be justified as
a GAM algorithm. We see several possibilities for 
interesting applications and extensions, such as quantile regression
with autoregressive (or ARMA) errors, and spatial models that use the 
mixed models framework to specify the correlation structure. 
Another extension is to variable selection for the `large p' scenario 
in which the variables are selected based
on their association with quantiles, rather than a mean response, by modifying the 
 methodology developed in \cite{bar2018scalable}.

\appendix
\section{Appendix}\label{sec.appendix}

\subsection{Proofs}
\textbf{Proof of Proposition \ref{propPLik}:} \\
First note that the check loss function (\ref{checkloss}) can be written as
$$\rho_q(u_i)=\frac{1}{2}\left[|u_i|+(2q-1)u_i\right]\,, \mbox{ so we can write:}$$
$$e^{-2\rho_q(u_i)} =e^{-(2q-1)u_i}e^{-|u_i|}\,.$$
The marginal distribution of $u$ is denoted by $h(u)$, where
\begin{eqnarray*}
 h(u) &=& \int_0^\infty\frac{1}{\sqrt{2\pi\lambda}} \exp\left[- \frac{(u-(1-2q)\lambda)^2}{2\lambda}\right]
 \cdot2q(1-q) \exp\left[- 2q(1-q)\lambda\right] d\lambda\\
 &=& 2q(1-q)\int_0^\infty\frac{1}{\sqrt{2\pi\lambda}} \exp\left[- \frac{u^2-2u(1-2q)\lambda +
 (1-2q)^2\lambda^2 + 4q(1-q)\lambda^2}
 {2\lambda}\right] d\lambda\\
&=& 2q(1-q)\int_0^\infty\frac{1}{\sqrt{2\pi\lambda}} \exp\left[- \frac{u^2-2u\lambda+\lambda^2 (1-4q+4q^2+4q-4q^2) +4qu\lambda
 }
 {2\lambda}\right] d\lambda\\
&=& 2q(1-q)\int_0^\infty\frac{1}{\sqrt{2\pi\lambda}} \exp\left[- \frac{u^2-2u\lambda+\lambda^2}
 {2\lambda} - 2qu\right] d\lambda\\
 &=& 2q(1-q)\int_0^\infty\frac{1}{\sqrt{2\pi\lambda}} \exp\left[- \frac{u^2/\lambda-2u+\lambda}{2}
 - 2qu\right] d\lambda\\
 &=& 2q(1-q)\int_0^\infty\frac{1}{\sqrt{2\pi\lambda}} \exp\left[-\frac{u^2/\lambda+\lambda}{2}
 - (2q-1)u)\right] d\lambda\\
 &=& 2q(1-q)e^{-(2q-1)u}\int_0^\infty\frac{1}{\sqrt{2\pi\lambda}}\exp\left\{-\frac{1}{2}(\lambda+u^2/\lambda)\right\}d\lambda
 \end{eqnarray*}
From \cite{pols:scot:2012} and formula 12 on page 368 of \cite{gradshteyn2014table} (with $y=x/2$ and $\nu=.5$) we have that
\begin{equation*}
e^{-|u|}=\int_0^\infty\frac{1}{\sqrt{2\pi\lambda}}\exp\left\{-\frac{1}{2}(\lambda+u^2/\lambda)\right\}d\lambda\,,
\end{equation*}
It follows that, 
\begin{eqnarray*}
 h(u) &=& 2q(1-q)\exp\{-[(2q-1)u+|u|]\} = 2q(1-q)\exp[-2\rho_q(u)]\,.
 \end{eqnarray*}

\noindent\textbf{Proof of Proposition \ref{propUncorr}:} \\
We denoted  the scaled, binary-valued QR
residuals by $\mathbf{c}=sgn(Y-X^T\bm\beta_q) - (1-2q)\mathbf{1}$. We
found the WLS
solution from the M-step:
$$\bm\beta_q = (X^T\Lambda^{-1}X)^{-1}X^T\Lambda^{-1}\left[Y-(1-2q)\bm\lambda\right].$$
Multiplying both sides by $(X^T\Lambda^{-1}X)^{-1}$ we get
$$(X^T\Lambda^{-1}X)\bm\beta_q = X^T\Lambda^{-1}\left[Y-(1-2q)\bm\lambda\right]$$
and rearranging terms we get
\begin{eqnarray*}
X^T\Lambda^{-1}(Y-X^T\bm\beta_q)&=&(1-2q)X^T\Lambda^{-1}\bm\lambda\\
&=&(1-2q)X^T\mathbf{1}\,.
\end{eqnarray*}
Expressing $Y-X^T\bm\beta_q$ as $\Lambda\cdot sgn(Y-X^T\bm\beta_q)$ we get
\begin{eqnarray*}
X^T\Lambda^{-1}[\Lambda\cdot sgn(Y-X^T\bm\beta_q)]&=&(1-2q)X^T\mathbf{1}
\end{eqnarray*}
so, $X^T\mathbf{c}=\mathbf{0}$.

\subsection{Conditions for consistency of the QR estimator}
Let the $q$th conditional quantile function of $Y|\mathbf{X}$ be $Q_Y(q|\mathbf{X})$.
Per \cite[Section 4.1.2]{koen:2005}, for the quantile regression estimator $\hat{\bm{\beta}}_q$
to be consistent, the following conditions have to hold:
\begin{enumerate}
\item There exists $d > 0$ such that
$$\underset{n\rightarrow\infty}{\lim\inf} \underset{\|u\|=1}{\inf} n^{-1}\sum I(|\mathbf{x}_i'u|<d)=0\,.$$
\item There exists $D > 0$ such that
$$\underset{n\rightarrow\infty}{\lim\sup} \underset{\|u\|=1}{\sup} n^{-1}\sum (\mathbf{x}_i'u)^2\le D\,.$$
\end{enumerate}

\subsection{Simulation Scenarios}
\begin{landscape}
\begin{table}[b!]
\begin{tabular}{|l|l|l|l|}
\hline
No. & Description & Model for the response, $y$ & Error distribution\\
\hline
1    & Intercept only & $3$ & $N(0,0.25^2)$\\
2-11 & Simple linear model & $5-x$ & $N(0,\sigma^2)$, $\sigma\in\{0.1,0.2,\ldots,1\}$\\
12   & Two predictors & $1-3x_1+2x_2$ & $N(0,0.1^2)$\\
13   & Five predictors & $1-3x_1+2x_2+2x_3-x_4-2x_5$ & $N(0,0.1^2)$\\
14   & s.d. increases linearly  & $3+2x$ & $N(0,(0.1+0.2x)^2)$\\
15   & s.d. increases linearly  & $5+x$ & $N(0,(0.1+0.5x)^2)$\\
16   & s.d. increases linearly & $3+0.5x$ & $N(0,(0.5+0.7x)^2)$\\
17   & Polynomially increasing s.d. & $1-2x$ & $N(0,(0.1+0.2x^3)^2)$\\
18   & Linearly decreasing s.d. & $7+3x$ & $N(0,(1-0.5x)^2)$\\
\hline
19   & Intercept only & $5$ & $LN(0,0.75)$\\
20   & Simple linear model & $3-x$ & $LN(0,0.75)$\\
21   & Five predictors & $1-3x_1+2x_2+2x_3-x_4-2x_5$ & $LN(0,0.75)$\\
22   & Linearly increasing (log) s.d. & $2-2x$ & $LN(0,0.5+0.5x)$\\
\hline
23   & Quadratic, increasing variance & $6x^2+x+120$ & $N(0, (0.2+x)^2)$\\
24   & Interaction, increasing variance  & $4x_1x_2$ & $N(0, (0.1+0.2x_1)^2)$\\
\hline
25   & Mixed model  & $2 + x + z'u$ & $u \sim N(0,0.5^2)$, $e\sim N(0, 0.1^2)$\\
\hline
\end{tabular}
\caption{Simulation scenarios. In simulations 1-11, 14-17, and 24 the predictors were drawn uniformly 
from  $(0,1)$. In simulation 12 $x_1\sim U(0,1)$ and $x_2\sim U(-3,3)$.
In simulations 13, 18-22 the predictors were drawn uniformly 
from  $(-1,1)$ and in simulation 23, from $(-5,5)$.
In simulation 25, $x_{it}\sim N(t/4, 0.1^2)$, for $t=1,2,3,4$.}\label{sims}
\end{table}

\end{landscape}

\begin{figure}
  \begin{center}
 \includegraphics[trim={0 0.5cm 2 1cm},clip, scale=0.7]{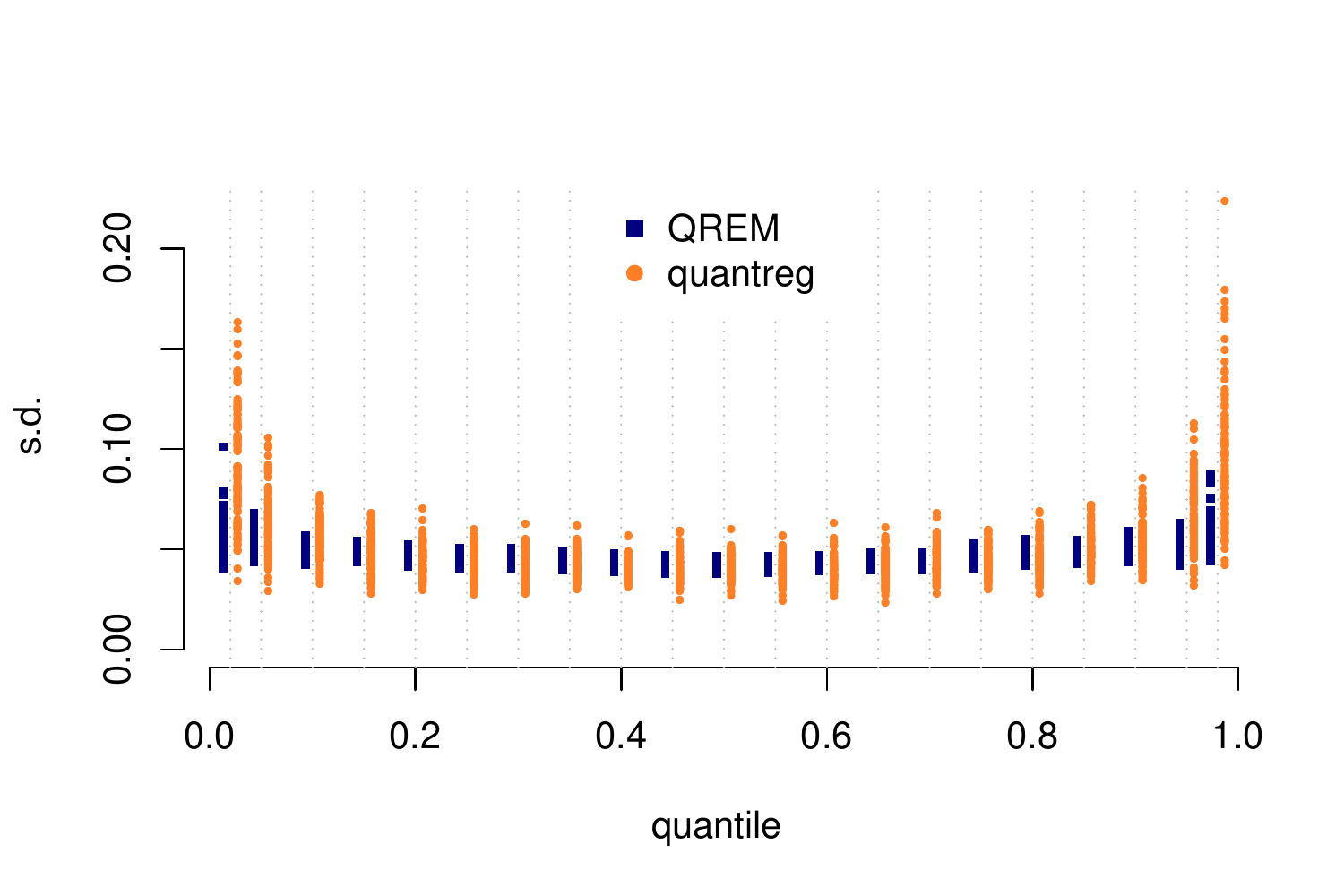}
\end{center}
\caption{QR simulation \#15: $\hat\sigma_{\hat\beta_1}$ for $q\in\{0.02,0.05,0.1,\ldots,0.95,0.98\}$ for
$y\sim N(5+x,(0.1+0.5x)^2)$.}\label{sdbeta1sim15}
 \end{figure}

\begin{figure}[b!]
  \begin{center}
 \includegraphics[trim={0 0.5cm 2 1cm},clip, scale=0.7]{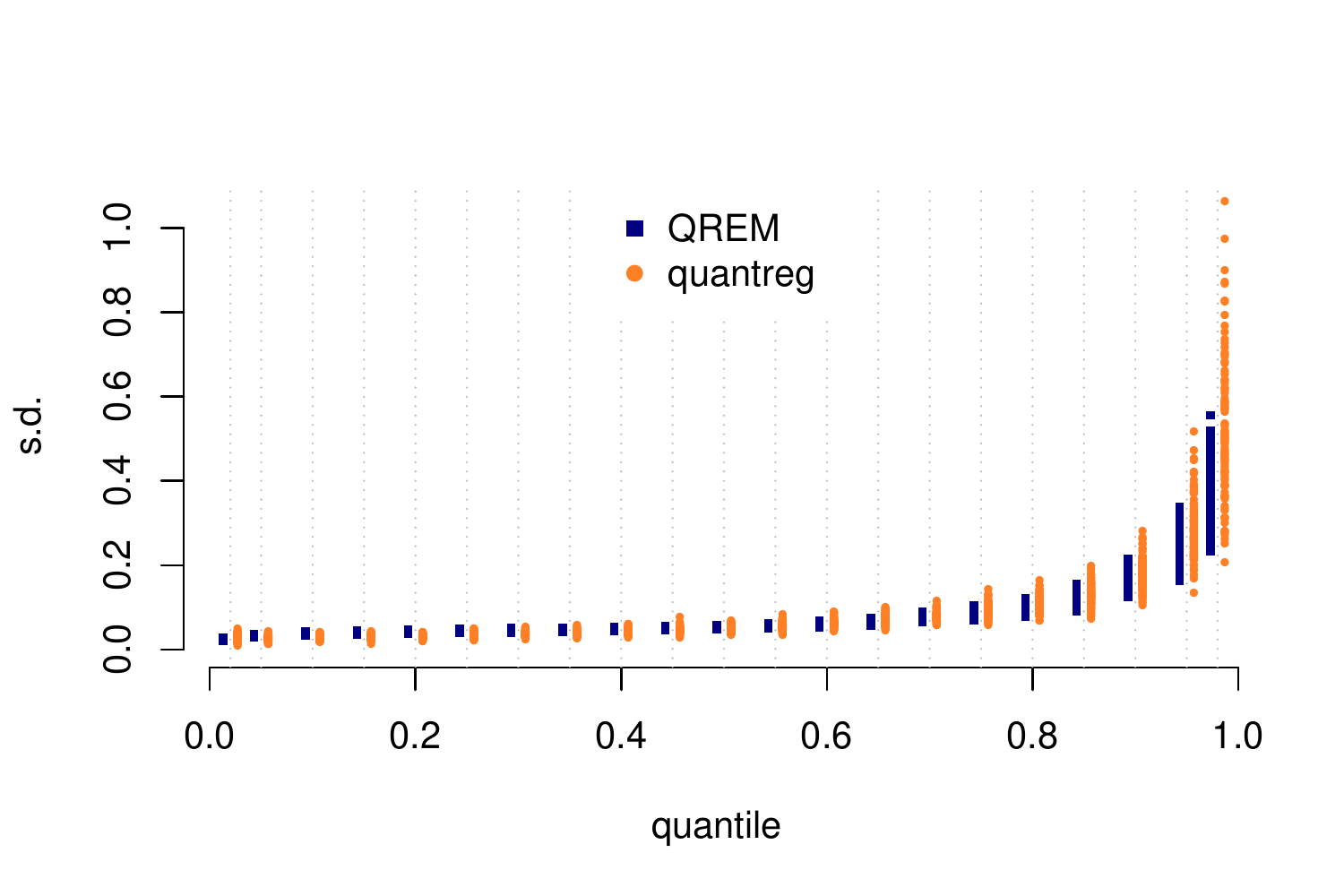}
\end{center}
\caption{QR simulation \#21: $\hat\sigma_{\hat\beta_1}$ for $q\in\{0.02,0.05,0.1,\ldots,0.95,0.98\}$ for
$y=1-3x_1+2x_2+2x_3-x_4-2x_5+\epsilon_i$ and $\epsilon_i\sim LN(0,0.75)$.}\label{sdbeta1sim21}
 \end{figure}

%






\bibliographystyle{chicago}
\bibliography{qrem}

\end{document}